\numberwithin{equation}{section}
\numberwithin{equation}{section}
\def \R {{\rm I\kern -2.2pt R\hskip 1pt}}
\newcommand{\qed}{\rule{0.5em}{1.5ex}}
\newtheorem{theorem}{Theorem} 
\newtheorem{corollary}{Corollary} 
\newenvironment{proof}{\noindent {\bf Proof.\nopagebreak}}{~\qed}
\begin{document}
	
	\begin{center} { \large \sc  Stationary Proportional Hazard Processes via Complementary Power Function Distribution Processes}\\
		\vskip 0.1in {\bf Barry C. Arnold}\footnote{barnold@ucr.edu}
		\\
		\vskip 0.1in 
		Department of Statistics, University of California, Riverside, USA. \\
		
		\vskip 0.1in {\bf  B.G. Manjunath \footnote{manjunath.raju@rgu.ac.in}}
		\\
		\vskip 0.1in 
		Department of Statistics, Rajiv Gandhi University, Arunachal Pradesh, India. \\
		
		\vskip 0.1in {\bf  S. Sachdeva  \footnote{sachinstats007@gmail.com}}
		\\
		\vskip 0.1in 
		School of Mathematics and Statistics, University of Hyderabad, Hyderabad, India. \\
		\vspace{.3in}
		
		March  06, 2024

	\end{center}

	\begin{abstract}
		In the following, we introduce new proportional hazard (PH) processes, which are derived by a marginal transformation applied to complementary power function distribution (CPFD) processes. Also, we introduce two new Pareto processes, which are derived from the proportional hazard family.  We discuss distributional features of such processes, explore inferential aspects and include an example of applications of the new processes to real-life data.
	\end{abstract}
	
	\noindent{\bf Key Words}:  proportional hazard process, power function distribution process,  complementary power function distribution process, Pareto process, stationarity, Markovian property
	
	\bigskip

	\bigskip
	
	\section{Introduction}
	An important tool that will be used in constructing proportional hazard processes is a distribution that we will call the complementary power function distribution (CPFD). The CPFD processes are the perfect tools for developing stationary proportional hazard processes, which will be of interest since an assumption of proportional hazards is frequently deemed to be plausible in many applications. Another important specific case of a proportional hazard family of distributions is given by the classical Pareto distribution which is used in the study of inequality in income distribution.  In the following sections we discuss distributional features of CPFD processes and the Pareto processes, explore inferential aspects and include an example of application of such processes to real-life data.

	\section{Proportional reversed hazard processes}
	It is well known that proportional reversed hazard (PRH) processes can be viewed as being obtained by marginal transformations applied to a particular PFD process. In the following, we define two PFD processes introduced in Kundu \cite{kd22} (called Kundu process) and Arnold et al. \cite{asm23} (called A-M process).
	\bigskip
	\subsection{PFD-processes} We begin by recalling the definition of the power function distribution (PFD) and two PFD processes introduced in Arnold et al. \cite{asm23}. We say that $X$ has a power function distribution (PFD) with parameter $\alpha$ if its distribution is of the form.
	
	\begin{equation}\label{PF-df}
		F_X(x)=x^{\alpha}, \ \ 0<x<1, \ \ \alpha>0.
	\end{equation}
	If $X$ has distribution ({\ref{PF-df}) then we will write $X\sim PFD(\alpha).$ The corresponding density is
		\begin{equation}
			f_X(x)=\alpha x^{\alpha-1}, \ \ 0<x<1, \ \ \alpha>0.
		\end{equation}
		This can be recognized as a $Beta(\alpha,1)$ density.
		There are two special properties of power function distributions that will be useful for defining stationary PFD processes. 
		
		\begin{description}		
			\item[(A) Closure under maximization:] If $X$ and $Y$ are independent with $X\sim PFD(\alpha)$ and $Y \sim PFD(\beta)$,  then $Z=\max\{X,Y\} \sim PFD(\alpha+\beta)$.
			\item[(B) Closure under raising to a power:]  If $X\sim PFD(\alpha)$ and if $\delta >0$, then $ V=X^{\delta} \sim PFD(\alpha/\delta)$.
		\end{description}
		\bigskip
		
		The following two stationary power function distribution processes (PFD-processes)  with PFD marginals proved to be useful for generating stationary proportional reversed hazard processes. 
		\begin{description}		\item[(I) A Kundu PFD process:] Consider an i.i.d. sequence, $\{U_n\}$, of  $uniform(0,1)$ random variables. The PFD process 
			$\left\{ X^*_n \right\} $ will be defined by
			\begin{equation}\label{KPFDP}
				X^*_n= \max \left\{U_n^{1/\alpha},U_{n-1}^{1/\beta} \right\},
			\end{equation}
			where $\alpha, \beta >0$, see Kundu \cite{kd22}.
			\item[(II) An A-M PFD process:]  Define $Y^*_0=U_0^{1/\alpha} \sim PF(\alpha)$ and for $n=1,2,...$ define the PFD process 
			
			\begin{equation}\label{AMPFDP}
				Y^*_n= \max \{{(Y^*_{n-1})}^{\alpha/(\alpha-\delta)},U_n^{1/\delta}\},
			\end{equation}
			where $\delta \in (0,\alpha)$ and the $U_n$'s are i.i.d. $uniform(0,1)$ random variables, see Arnold et al. \cite{asm23}.
		\end{description}
		
		\bigskip
		
		\bigskip
		
		Now, if $\{X_n\}$ is either one of the above PFD processes, we can construct a stationary proportional reverse hazard process of the form
		
		\begin{equation}
			Y_n=F_0^{-1}(X_n), \ \ \ n=0,1,2,....
		\end{equation}
		
		where $F_0$ is an arbitrary distribution function with support $(0,\infty)$.
		
		We refer to Arnold et al. \cite{asm23} for stochastic properties, inferential aspects, and also an example of applications of the above two new PFD processes. 
		
		\section{Proportional hazard processes}
		An important tool that will be used in constructing proportional hazard processes is a distribution that we will call the complementary power function distribution (CPFD). It might also be called a power function survival distribution. The construction of proportional hazard processes, using the CPFD will be a close parallel to that used in developing the proportional reverse hazard processes above.
		
		\subsection{The complementary power function distribution}
		
		We say that $X$ has a complementary power function distribution (CPFD) with parameter $\alpha$ if its distribution is of the form.
		
		\begin{equation}\label{CPF-df}
			F_X(x)=1-(1-x)^{\alpha}, \ \ 0<x<1, \ \ \alpha>0,
		\end{equation}
		i.e., if $\overline{F}_X(x)=(1-x)^{\alpha}, \ \ 0<x<1.$ \ \ 
		If $X$ has distribution (\ref{CPF-df}) then we will write $X\sim CPFD(\alpha).$ The corresponding density is
		\begin{equation}
			f_X(x)=\alpha (1- x)^{\alpha-1}, \ \ 0<x<1, \ \ \alpha>0.
		\end{equation}
		This can be recognized as a $Beta(1,\alpha)$ density.
		There are two special properties of complementary power function distributions that will be useful for defining stationary CPFD processes. 
		\begin{description}		
			\item[(C) Closure under minimization:] If $X$ and $Y$ are independent with $X\sim CPFD(\alpha)$ and $Y \sim CPFD(\beta)$,  then $Z=\min\{X,Y\} \sim CPFD(\alpha+\beta)$.
			\item[(D) Closure under complementary raising to a power:]  If $X\sim CPFD(\alpha)$ and if $\delta >0$, then $ V=1-(1-X)^{\delta} \sim CPFD(\alpha/\delta)$. Of course, this is just a rewritten version of statement (B) for PFD variables.
		\end{description}
		\bigskip
		\subsection {Two CPFD-processes}

		Note that, if $U\sim Uniform(0,1)$ and $X=1-U^{1/\alpha}$ then $P(X>x)=(1-x)^\alpha$.
		
		\bigskip
		
		Now, analogous to the two processes defined in Section 2.1, we have the following CPFD processes.

		\begin{description}
			\item[(I) A Kundu CPFD process:] Consider an i.i.d. sequence, $\{U_n\}$, of  $uniform(0,1)$ random variables. The Kundu CPFD-process 
			$\left\{ V_n^{*} \right\} $ will be defined by
			\begin{equation}\label{CPFD-proc}
				V_n^{*} = \min\{1-U_{n-1}^{1/\alpha},1-U_n^{1/\beta}\}
			\end{equation}
			where $\alpha, \beta >0$.
			
			Note that $V_n^{*}=1-X^{*}_n, \forall n$ where $\{X_n^{*}\}$ is the Kundu PFD process .
			
			\item[(II) An A-M CPFD process:]  We define the A-M CPFD process $\{W_n^{*}\}$ as follows.
			Define $W^*_0=1-U_0^{1/\alpha}$ and for $n=1,2,...$ define the CPFD process 
			
			\begin{equation}\label{AMPFDP}
				W^*_n= \min \{1-{(1-W^*_{n-1})}^{\alpha/(\alpha-\delta)},1-U_n^{1/\delta}\},
			\end{equation}
			where $\delta \in (0,\alpha)$ and the $U_n$'s are i.i.d. $uniform(0,1)$ random variables. 
		\end{description}
		Note that $W_n^{*}=1-Y^{*}_n \  \forall n$ where $\{Y_n^{*}\}$ is the A-M PFD process .
		\bigskip
		\
		
		\subsection{Two proportional hazard processes}
		The motivation for introducing complementary power function distribution processes (which are so simply related to power function distribution processes) is that the CPFD processes are the perfect tools for developing stationary proportional hazard processes, which will be of interest since an assumption of proportional hazards is frequently deemed to be plausible in many applications. A simple marginal transformation applied to a CPFD process will be all that is required.

		We can then define a Kundu PH process $\{S_n^{*}\}$ with PH marginals using a Kundu CPFD process $\{V_n^{*}\}$ as follows
		\begin{equation}
			S_n^{*}=F_0^{-1}(V_n^{*}), \ \ \ n=0,1,2,....
		\end{equation}
		where it can be verified that $$P(S_n^{*} >s)=[\overline{F}_0(s)]^{\alpha + \beta}.$$
		
		\bigskip

		In parallel fashion, we can define an  A-M PH process as follows
		\begin{eqnarray}
			T_n^{*} = F^{-1}_0(W_n^{*}) = F^{-1}_0(1-Y^*_n) 
		\end{eqnarray}
		where ${W_n}$ is the A-M CPFD process.
		
		We may readily verify that $\{T_n^{*}\}$ has proportional hazard marginal distributions as follows:
		\begin{eqnarray*}
			P(T_n^{*}>t)&=&P(F^{-1}_0(1-Y^*_n) >t)=P(1-Y_n^{*}>F_0(t) \\
			&=&P(Y_n^{*}<1-F_0(t))=P(Y_n^{*}<\overline{F_0}(t)) \\
			&=&[\overline{F_0}(t)]^{\alpha}.
		\end{eqnarray*}
		
		\bigskip

		\subsection{New stationary Pareto processes}
		A specific case of a proportional hazard family of distributions is provided by the classical Pareto distribution, which is much used in the study of inequality in economic variables such as income \cite{bca15}. Recall that a random variable $X$ has a classical Pareto distribution or Pareto(I)  distribution if its survival function is of the form
		$$\overline{F}_X(x)=(x/\sigma)^{-\alpha}, \ \ x>\sigma.$$
		
		\bigskip
		
		To indicate this, we write $X\sim P(I)(\sigma,\alpha)$.
		
		\bigskip
		
		Note that if $X_1 \sim P(I)(\sigma,\alpha_1)$ and $X_2 \sim P(I)(\sigma,\alpha_2)$ are independent 
		and if we define $Z=\min \{X_1,X_2\}$ then $Z \sim P(I)(\sigma, \alpha_1+\alpha_2)$.
		
		\bigskip
		
		Note that the family of $P(I)(\sigma,\alpha)$ distributions, for a fixed value of $\sigma$ is a
		proportional-hazard family with its basic survival function of the form $\overline{F_0}(x)=(x/\sigma)^{-1},$ with corresponding quantile function
		$$F_0^{-1}(y)=\sigma (1-y)^{-1}.$$
		
		\bigskip
		
		It will thus be possible to define two stationary Pareto processes as in Subsection 3.3 using this particular form for $\overline{F_0}(x)$. Instead, we will equivalently develop the processes by utilizing a simple relationship between Pareto and uniform distributions.
		
		Recall that the basic proportional hazard distribution (or CPFD) family has a density  of the form:
		$$f_0(x:\alpha)=\alpha(1-x)^{\alpha-1}, \ \ 0<x<1,$$
		
		\bigskip
		
		which is the density of a random variable of the form $Y=1-U^{1/\alpha}$
		
		\bigskip
		
		where $U \sim uniform(0,1).$ 
		\bigskip
		
		Note that a $Pareto(I)(\sigma,\alpha)$ variable $X$ can be represented in the form
		
		$$X=\sigma U^{-1/\alpha},$$
		or equivalently
		$$X=\sigma(1- U)^{-1/\alpha}.$$
		\bigskip
		
		Now, we may define two new Pareto processes as follows:
		
		\bigskip
		
		\begin{description}
			\item[(I) A Kundu Pareto process:] $S^{*}_n=\sigma /X^*_n$ where $\{X^*_n\}$ is a Kundu PFD process. i.e.,
			\begin{eqnarray}
				S_n^{*}=\sigma \min\{U_{n-1}^{-1/\alpha},U_n^{-1/\beta}\},
			\end{eqnarray}
			where $\alpha, \beta >0$.
			For this process we have $S_n^{*}\sim P(\sigma,\alpha+\beta) \ \forall n.$ 
			We refer to Figures \ref{fig1} for examples of simulated sample paths of steps  $n=10$ and $n=50$.
			\begin{figure} [h!]
				\begin{subfigure}{8cm}
					\centering
					\includegraphics[width=8cm]{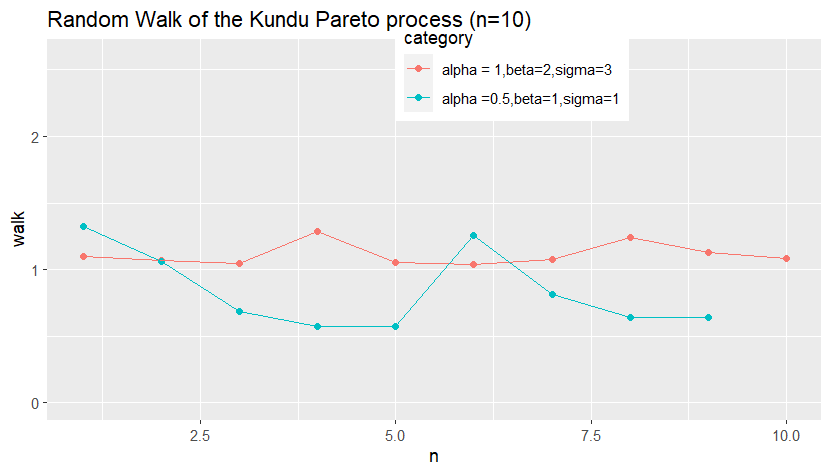}
					\caption{$n=10$} 
				\end{subfigure}
				\begin{subfigure}{8cm}
					\centering
					\includegraphics[width=8cm]{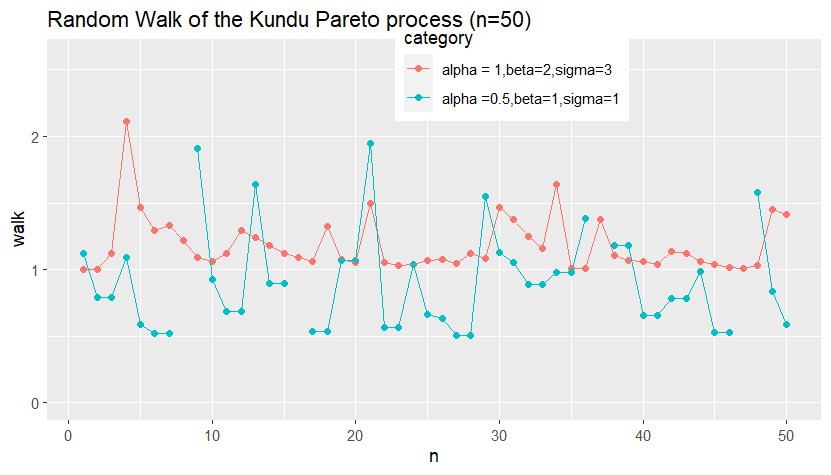}
					\caption{$n=50$} 
				\end{subfigure}
				\caption{Sample Path for the Kundu Pareto process }
				\label{fig1}
			\end{figure}
			\item[(II) An A-M Pareto process:]  define $$T^{*}_{-1}\sim Pareto(I)(\sigma,\alpha)$$
			and
			\begin{eqnarray}
				T_n^{*} = \sigma \min\{{(T^*_{n-1}/\sigma)}^{-\alpha/(\alpha-\delta)}, U_n^{-1/\delta}\}.
			\end{eqnarray}
			where $0<\delta<\alpha.$  For this process we have $T_n^{*} \sim P(\sigma,\alpha).$
		\end{description}
		
		We refer to Figures \ref{fig2} for examples of simulated sample paths of steps  $n=10$ and $n=50$.
		\begin{figure} [h!]
			\begin{subfigure}{8cm}
				\centering
				\includegraphics[width=8cm]{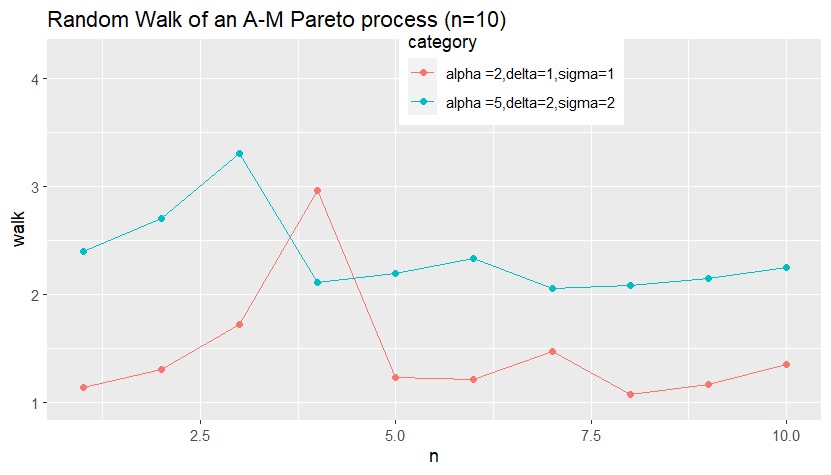}
				\caption{$n=10$} 
			\end{subfigure}
			\begin{subfigure}{8cm}
				\centering
				\includegraphics[width=8cm]{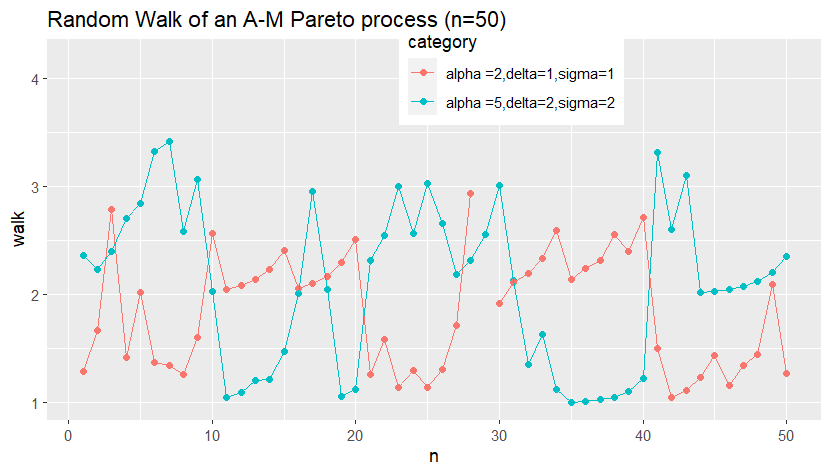}
				\caption{$n=50$} 
			\end{subfigure}
			\caption{Sample Path for an A-M Pareto process }
			\label{fig2}
		\end{figure}
		
		\section{Stochastic properties of the CPFD and Pareto processes}
		\subsection{Kundu CPFD process}
		\begin{theorem}
			If the sequence of random variables $\{V^*_n\}$ is as defined in equation $(3.3)$, then the following statements are true:
			\begin{enumerate}[label=(\alph*)]
				\item For each n, $\{V^*_n\}$ has a CPFD$(\alpha+\beta)$ distribution. 
				\item The joint distribution function of $(V^*_{n-1}, V_n^*)$ is $1- (1-v_{n-1})^{\alpha +\beta} -(1-v_n)^{\alpha + \beta} + 
				(1-v_{n-1})^{\alpha} (1-v_n)^{\beta} \min\{(1-v_n)^\alpha, (1-v_{n-1})^\beta\}$.
				\item $\{V^*_n\}$  is a stationary non-Markovian process.
			\end{enumerate}
		\end{theorem}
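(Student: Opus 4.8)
\noindent All three parts reduce to statements about the underlying i.i.d.\ uniform sequence $\{U_n\}$. For part (a), write $V^*_n=\min\{A,B\}$ with $A=1-U_{n-1}^{1/\alpha}$ and $B=1-U_n^{1/\beta}$; being functions of the distinct variables $U_{n-1},U_n$, these are independent, and (as already recorded in the text) $A\sim CPFD(\alpha)$ and $B\sim CPFD(\beta)$, so closure property (C) gives $V^*_n\sim CPFD(\alpha+\beta)$. This step is routine.

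For part (b), I would use inclusion--exclusion,
\[
F_{V^*_{n-1},V^*_n}(v_{n-1},v_n)=1-\overline F_{V^*_{n-1}}(v_{n-1})-\overline F_{V^*_n}(v_n)+P\!\left(V^*_{n-1}>v_{n-1},\,V^*_n>v_n\right),
\]
with the marginal survival terms equal to $(1-v_{n-1})^{\alpha+\beta}$ and $(1-v_n)^{\alpha+\beta}$ by part (a). For the joint survival term, observe that $\{V^*_{n-1}>v_{n-1}\}=\{U_{n-2}<(1-v_{n-1})^{\alpha}\}\cap\{U_{n-1}<(1-v_{n-1})^{\beta}\}$ and $\{V^*_n>v_n\}=\{U_{n-1}<(1-v_n)^{\alpha}\}\cap\{U_n<(1-v_n)^{\beta}\}$; since $U_{n-2},U_{n-1},U_n$ are independent with only $U_{n-1}$ shared between the two events, this probability factors as $(1-v_{n-1})^{\alpha}\min\{(1-v_{n-1})^{\beta},(1-v_n)^{\alpha}\}(1-v_n)^{\beta}$, and substituting yields the claimed formula.

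For part (c), stationarity is immediate: $(V^*_n,\dots,V^*_{n+k})=\Phi(U_{n-1},\dots,U_{n+k})$ for a fixed measurable map $\Phi$ not depending on $n$, and the block $(U_{n-1},\dots,U_{n+k})$ always consists of $k+2$ i.i.d.\ uniforms, so the finite-dimensional laws are shift invariant. For non-Markovianity I would first extend the bookkeeping of part (b) to the triple $(V^*_{n-1},V^*_n,V^*_{n+1})$, obtaining
\[
P\!\left(V^*_{n-1}>a,\,V^*_n>b,\,V^*_{n+1}>c\right)=(1-a)^{\alpha}\min\{(1-a)^{\beta},(1-b)^{\alpha}\}\min\{(1-b)^{\beta},(1-c)^{\alpha}\}(1-c)^{\beta},
\]
and then extract regular conditional distributions by differentiating in $a,b$ and dividing by the joint density of $(V^*_{n-1},V^*_n)$. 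The crux is this: if $a>1-(1-b)^{\alpha/\beta}$, then $V^*_{n-1}=a$ forces $U_{n-1}<(1-b)^{\alpha}$, hence $1-U_{n-1}^{1/\alpha}>b$, so the minimum defining $V^*_n=b$ must be attained by the $U_n$-term, which pins $U_n=(1-b)^{\beta}$ and confines $V^*_{n+1}$ to $(0,\,1-(1-b)^{\beta/\alpha}]$ almost surely; conditioning on $V^*_n=b$ alone, however, leaves $U_n$ free in $(0,(1-b)^{\beta})$, so $V^*_{n+1}>1-(1-b)^{\beta/\alpha}$ with positive probability. Hence $P(V^*_{n+1}>c\mid V^*_n=b,V^*_{n-1}=a)$ is not a function of $b$ alone and the Markov property fails.

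I expect this last step to be the main obstacle, for two reasons. First, the conditioning events $\{V^*_n=b\}$ are null, so the informal pathwise description above must be made rigorous via the explicit trivariate law just displayed (or an honest regular conditional distribution). Second, one must select the right region: for small $a$ the conditional probability $P(V^*_{n+1}>c\mid V^*_{n-1}=a,V^*_n=b)$ in fact coincides with $P(V^*_{n+1}>c\mid V^*_n=b)$, so the discrepancy only surfaces once $a>1-(1-b)^{\alpha/\beta}$, and naive checks on tail events do not by themselves contradict Markovianity. A shortcut, if preferred, is to invoke the non-Markovianity of the Kundu PFD process $\{X^*_n\}$ from \cite{kd22,asm23} together with the identity $V^*_n=1-X^*_n$, since the bijective monotone map $x\mapsto 1-x$ preserves the (non-)Markov property.
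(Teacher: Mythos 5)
Your proposal is correct, but it proceeds differently from the paper. The paper proves everything by transferring known results for the Kundu PFD process $\{X^*_n\}$ of Arnold et al.\ through the identity $V^*_n=1-X^*_n$: part (a) by computing $P(1-X^*_n\le v)$ from the known $CPFD$-complement marginal, part (b) by inclusion--exclusion in terms of $F_{X^*_{n-1},X^*_n}$ quoted from that reference, and part (c) (both stationarity and non-Markovianity) simply by citing those properties of $\{X^*_n\}$ and noting that $x\mapsto 1-x$ preserves them --- exactly the ``shortcut'' you mention at the end. You instead work directly from the i.i.d.\ uniform representation in $(3.3)$: property (C) for (a), and for (b) the factorization $P(V^*_{n-1}>v_{n-1},V^*_n>v_n)=(1-v_{n-1})^{\alpha}\min\{(1-v_{n-1})^{\beta},(1-v_n)^{\alpha}\}(1-v_n)^{\beta}$ over $U_{n-2},U_{n-1},U_n$, which is correct and has the incidental benefit of confirming the exponent placement in the theorem statement (the paper's displayed derivation of (b) swaps $\alpha$ and $\beta$ relative to its own statement, evidently a typo). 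Your direct non-Markovianity argument via the trivariate survival function and the forcing of $U_n=(1-b)^{\beta}$ when $a>1-(1-b)^{\alpha/\beta}$ is sound in outline, and you rightly flag that conditioning on null events must be handled through the explicit trivariate law or regular conditional distributions; the paper avoids this entirely by citation, so your route is more self-contained but leaves that one step as a sketch, with the citation-based fallback available if you prefer the paper's economy.
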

		
		\begin{proof}
			If $V^*_n$ is a CPFD process, then $V^*_n=1-X^*_n$, where $X^*_n$ is a stationary non-Markovian PFD process (see Arnold et al. \cite{asm23}). Hence, the stationarity and non-Markovian nature of $V^*_n$ follows. 
			To prove part $(a)$ we use the following notation
			\begin{eqnarray}
				F_{V^*_n}(v_n) = P(V_n^* \leq v) &=& P(1- X^*_n \leq v_n) \nonumber \\
				&=& P(X^*_n \geq 1-v_n) \nonumber \\
				&=& 1- P(X^*_n \leq 1- v_n) \nonumber \\
				&=& 1-(1-v_n)^{\alpha +\beta}
			\end{eqnarray}
			For the proof of Part $(b)$, we refer to $V^*n=1-X^*_n$ and the analogous result for $X^*_n$ in Arnold et al. \cite{asm23}. Now, the joint distribution of $V^*_{n-1}$ and $V^*_n$,
			\begin{eqnarray}
				F_{V^*_n, V^*_{n-1}} (v_n,v_{n-1} ) &=& P(V^*_n \leq v_n, V^*_{n-1} \leq v_{n-1}) \nonumber\\
				&=& P(1-X^*_n \leq v_n, 1-X^*_{n-1} \leq v_{n-1})  \nonumber\\
				&=& P(X^*_n \geq 1-v_n, X^*_{n-1} \geq 1- v_{n-1}) \nonumber \\
				&=& 1- F_{X^*_{n-1}}(1-v_{n-1}) -F_{X^*_{n}}(1-v_n) + F_{X^*_n,X^*_{n-1}}(1-v_n,1-v_{n-1})
				\nonumber\\
				&=& 1- (1-v_{n-1})^{\alpha + \beta} -(1-v_n)^{\alpha + \beta} + (1-v_n)^\alpha  \nonumber\\ &&(1-v_{n-1})^\beta \min\{ (1-v_n)^\alpha, (1-v_{n-1})^\beta\}.
			\end{eqnarray}
		\end{proof}
		
		\begin{corollary}
			If the sequence of random variables $\{V^*_n\}$ is as defined in equation $(3.3)$, then the mean and variance are
			\begin{eqnarray}
				E(V^*_n) &=& \frac{1}{\alpha +\beta +1} \\
				Var(V^*_n) &=& \frac{\alpha +\beta}{(\alpha + \beta +1)^2(\alpha +\beta +2)}	\end{eqnarray}
			
			\begin{proof}
				We know that 
				\begin{eqnarray}
					E(V^*_n) &=& E( 1- X^*_n) = 1- E(X^*_n) = 1- \frac{\alpha +\beta}{\alpha + \beta +1} = \frac{1}{\alpha +\beta +1} \\
					Var(V^*_n) &=&  Var(1-X^*_n) = Var(X_n^*) = \frac{\alpha +\beta}{(\alpha +\beta +1)^2 (\alpha +\beta +2)}.
				\end{eqnarray}
			\end{proof}
		\end{corollary}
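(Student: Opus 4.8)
The plan is to exploit the identity $V^*_n = 1 - X^*_n$ established in Section 3.2, together with part (a) of the preceding theorem, which tells us that $V^*_n \sim \mathrm{CPFD}(\alpha+\beta)$ for every $n$; equivalently $X^*_n \sim \mathrm{PFD}(\alpha+\beta)$, which is a $\mathrm{Beta}(\alpha+\beta,1)$ law. Since the moments of a Beta distribution are standard, the first step is simply to record $E(X^*_n)$ and $\mathrm{Var}(X^*_n)$: for a $\mathrm{Beta}(a,1)$ variable the density is $a\,x^{a-1}$ on $(0,1)$, so $E(X^*_n) = a/(a+1)$ and $E((X^*_n)^2) = a/(a+2)$, which gives $\mathrm{Var}(X^*_n) = a/\big((a+1)^2(a+2)\big)$ with $a = \alpha+\beta$. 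If one prefers not to quote Beta moments, the same two integrals $\int_0^1 x\,f_{X^*_n}(x)\,dx$ and $\int_0^1 x^2 f_{X^*_n}(x)\,dx$ are elementary.

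The second step is to transfer these to $V^*_n$ by linearity of expectation and the reflection invariance of the variance: $E(V^*_n) = E(1 - X^*_n) = 1 - E(X^*_n) = 1 - \frac{\alpha+\beta}{\alpha+\beta+1} = \frac{1}{\alpha+\beta+1}$, while $\mathrm{Var}(V^*_n) = \mathrm{Var}(1 - X^*_n) = \mathrm{Var}(X^*_n) = \frac{\alpha+\beta}{(\alpha+\beta+1)^2(\alpha+\beta+2)}$. Both expressions are free of $n$, which is consistent with the stationarity asserted in part (c) of the theorem, so the claimed formulas follow.

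There is essentially no obstacle here; the only points deserving a word of care are that the Beta-moment formula is being applied with shape parameters $(\alpha+\beta,1)$ rather than $(\alpha,\beta)$, and that the variance is unchanged under the reflection $x \mapsto 1-x$, so no separate computation of the second moment of $V^*_n$ is needed. Alternatively, one could work directly from the $\mathrm{CPFD}(\alpha+\beta)$ density $f(x) = (\alpha+\beta)(1-x)^{\alpha+\beta-1}$ and evaluate $\int_0^1 x\,f(x)\,dx$ and $\int_0^1 x^2 f(x)\,dx$ via the substitution $u = 1-x$, which again reduces to Beta-function integrals and yields the same answers.
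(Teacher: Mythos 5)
Your argument is correct and is essentially identical to the paper's proof: both use $V^*_n = 1 - X^*_n$, the $\mathrm{Beta}(\alpha+\beta,1)$ moments of $X^*_n$, linearity of expectation, and the invariance of variance under the shift-and-reflection $x \mapsto 1-x$. The extra detail you supply on where the Beta moments come from is fine but not a different method.
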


		If the sequence of random variables $\{V^*_n\}$ is as defined in equation $(3.3)$, the auto-correlation function is
		\begin{eqnarray}
			Corr(V^*_n, V^*_{n-1}) &=& \frac{Cov(V^*_n, V^*_{n-1})}{\sqrt{Var(V^*_n) Var(X^*_{n-1})}} \nonumber \\
			&=&  \frac{E(V^*_n V^*_{n-1})- \big(\frac{1}{\alpha +\beta +1}\big)^2}{\frac{\alpha + \beta}{(\alpha +\beta +1)^2 (\alpha +\beta +2)}}
		\end{eqnarray}
		
		From the usual expression for the auto-correlation in $(4.7)$, it is evident that the detail lacking is an expression for $E(V^*_n V^*_{n-1})$.
		To evaluate this, we argue as follows
		\begin{eqnarray}
			E(V^*_n V^*_{n-1}) &=& E((1-X^*_n)(1-X^*_{n-1}))  \nonumber \\
			&=& E(1-X^*_{n-1} -X^*_n + X^*_n X^*_{n-1})  \nonumber \\
			&=& 1 - 2 \frac{1}{\alpha +\beta +1} + E(X^*_n X^*_{n-1}).
		\end{eqnarray}
		For the computation of $ E(X^*_n X^*_{n-1})$ we refer to page 8 of Arnold et al. \cite{asm23}, hence
		\begin{eqnarray}
			E(X^*_n X^*_{n-1}) = A + B + C + D
		\end{eqnarray}
		where
		\begin{eqnarray}
			A &=& \frac{\beta^2}{\beta +1} \Bigg[ \frac{1}{\alpha + \beta +1} - \frac{\alpha}{\beta^2 + \alpha^2 +  \alpha \beta + \alpha +\beta}\Bigg], \nonumber \\
			B &=& \left[\frac{\alpha+\beta+\alpha^2+\beta^2+\alpha\beta}{\alpha\beta}\right]^{-1},\nonumber\\
			C &=& \frac{\beta \alpha}{(\beta +1)(\alpha +1)} \Bigg[ 1 - \frac{ \beta}{\alpha + \beta + 1 } - \frac{ \alpha}{\alpha + \beta + 1 }
			+ \frac{\alpha \beta}{\alpha^2 + \beta^2 +  \alpha \beta + \alpha + \beta} \Bigg], \nonumber\\
			D &=& \frac{\alpha^2 \beta}{\alpha +1} \Bigg[ \frac{1}{\beta^2 + \alpha + \alpha \beta} - \frac{1}{\alpha^2 + \beta^2  + 2 \alpha + \alpha \beta}\Bigg]. \nonumber
		\end{eqnarray}
		
		\begin{theorem}
			If the sequence of random variable $\{V^*_n\}$ is as defined in equation $(3.3)$ then 
			\begin{eqnarray}
				P(V^*_1<V^*_2) = \left\{
				\begin{array}{ll}
					\frac{\alpha + \beta}{2\alpha+\beta} & \mbox{if } \alpha > \beta \\
					\frac{\beta}{2\beta+\alpha} & \mbox{if } \alpha \leq \beta. 
				\end{array}
				\right.
			\end{eqnarray}
		\end{theorem}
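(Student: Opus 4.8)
The plan is to pass to the Kundu PFD process $\{X_n^*\}$ of equation (2.3) through the identity $V_n^*=1-X_n^*$, so that $P(V_1^*<V_2^*)=P(X_1^*>X_2^*)$, where $X_1^*=\max\{U_1^{1/\alpha},U_0^{1/\beta}\}$ and $X_2^*=\max\{U_2^{1/\alpha},U_1^{1/\beta}\}$ with $U_0,U_1,U_2$ i.i.d. uniform$(0,1)$. Since $U_1$ is the only variable common to the two terms, I would condition on $U_1=u$. Given $U_1=u$, the variables $X_1^*$ and $X_2^*$ are independent: $X_1^*=\max\{u^{1/\alpha},R\}$ with $R=U_0^{1/\beta}\sim PFD(\beta)$, and $X_2^*=\max\{W,u^{1/\beta}\}$ with $W=U_2^{1/\alpha}\sim PFD(\alpha)$, $R$ and $W$ independent. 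Each conditional law is that of the maximum of a $PFD$ variable and a constant (an atom at the constant, a power density above it), and the argument then turns on which of the constants $u^{1/\alpha}$, $u^{1/\beta}$ is larger; for $u\in(0,1)$ this is decided solely by the sign of $\alpha-\beta$.

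If $\alpha>\beta$, then $u^{1/\alpha}>u^{1/\beta}$, hence $X_1^*\ge u^{1/\alpha}>u^{1/\beta}$, and the event $\{X_1^*>X_2^*\}$ collapses to $\{\max\{u^{1/\alpha},R\}>W\}$. Averaging over $W$ and using $P(R>w)=1-w^\beta$,
\[
P(X_1^*>X_2^*\mid U_1=u)=P(W<u^{1/\alpha})+\int_{u^{1/\alpha}}^1(1-w^\beta)\,\alpha w^{\alpha-1}\,dw=\frac{\beta}{\alpha+\beta}+\frac{\alpha}{\alpha+\beta}\,u^{(\alpha+\beta)/\alpha},
\]
and integrating in $u$ over $(0,1)$ (the numerator collapsing via $\beta(2\alpha+\beta)+\alpha^2=(\alpha+\beta)^2$) gives $\tfrac{\alpha+\beta}{2\alpha+\beta}$. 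If $\alpha\le\beta$, then $u^{1/\alpha}\le u^{1/\beta}\le X_2^*$, so that on the event $\{X_1^*>X_2^*\}$ necessarily $X_1^*=R>u^{1/\beta}$, and one checks $\{X_1^*>X_2^*\}=\{R>\max\{W,u^{1/\beta}\}\}$; averaging over $W$ then gives
\[
P(X_1^*>X_2^*\mid U_1=u)=P(W\le u^{1/\beta})\,(1-u)+\int_{u^{1/\beta}}^1(1-w^\beta)\,\alpha w^{\alpha-1}\,dw=\frac{\beta}{\alpha+\beta}\bigl(1-u^{(\alpha+\beta)/\beta}\bigr),
\]
and integrating over $u\in(0,1)$ gives $\tfrac{\beta}{\alpha+\beta}\cdot\tfrac{\alpha+\beta}{\alpha+2\beta}=\tfrac{\beta}{2\beta+\alpha}$, the asserted value.

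The step I expect to be the main obstacle is the correct reduction of the event $\{X_1^*>X_2^*\}$ in each regime — in particular, keeping track of the atom of $X_1^*$ at $u^{1/\alpha}$ and of $X_2^*$ at $u^{1/\beta}$. When $\alpha=\beta$ these two atoms coincide, so $P(X_1^*=X_2^*)>0$; this is exactly why the two branches do not meet continuously and why the boundary $\alpha=\beta$ is placed in the $\alpha\le\beta$ branch. After that reduction, only elementary integrals of the form $\int_0^1 u^c\,du=(c+1)^{-1}$ remain.
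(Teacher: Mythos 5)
Your proposal is correct, and its opening move is exactly the paper's entire argument: writing $V_n^*=1-X_n^*$ so that $P(V_1^*<V_2^*)=P(X_1^*>X_2^*)$ for the Kundu PFD process. Where you diverge is that the paper stops there and simply quotes the values of $P(X_1^*>X_2^*)$ from the earlier Arnold--Sachin--Manjunath paper on PFD processes, whereas you rederive them from scratch by conditioning on the shared uniform $U_1$, splitting according to whether $u^{1/\alpha}$ or $u^{1/\beta}$ is larger, and integrating. I checked your conditional probabilities, $\frac{\beta}{\alpha+\beta}+\frac{\alpha}{\alpha+\beta}u^{(\alpha+\beta)/\alpha}$ for $\alpha>\beta$ and $\frac{\beta}{\alpha+\beta}\bigl(1-u^{(\alpha+\beta)/\beta}\bigr)$ for $\alpha\le\beta$, and the final integrations; they are right, and your reduction of the event $\{X_1^*>X_2^*\}$ in each regime (including the treatment of the atoms at $u^{1/\alpha}$ and $u^{1/\beta}$) is sound. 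Your version buys self-containedness and an explicit explanation of the discontinuity at $\alpha=\beta$ (the positive probability of ties, which justifies placing $\alpha=\beta$ in the second branch, where the formula gives the correct value $1/3$); the paper's version buys brevity at the cost of leaning entirely on the cited reference.
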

		\begin{proof}
			By definition of the process, we have $V^*_1 = 1-X^*_1$ and $V^*_2 = 1-X^*_2$. 
			\begin{eqnarray}
				P(V^*_1 < V^*_2) = P(1-X^*_1 < 1-X^*_2) =P(X^*_1 > X^*_2)
			\end{eqnarray}
			So, if $\alpha > \beta$ we have
			\begin{eqnarray}
				P(V^*_1 < V^*_2) =\frac{\alpha + \beta }{2 \alpha + \beta}
			\end{eqnarray}
			Now, for $\alpha < \beta$
			\begin{eqnarray}
				P(V^*_1 > V^*_2) =\frac{\alpha + \beta}{2 \beta + \alpha}
			\end{eqnarray}
		\end{proof}
		
		\subsection{A-M CPFD process}
		
		\begin{theorem}
			If the sequence of random variables $\{W^*_n\}$ is as defined in equation $(3.4)$, then the following statements are true:
			\begin{enumerate}[label=(\alph*)]
				\item For each n, $\{W^*_n\}$ has a CPFD$(\alpha)$ distribution. 
				\item The joint distribution function of $(W^*_{n-1}, W_n^*)$ is $1- (1-w_{n-1})^{\alpha } -(1-w_n)^{\alpha } + 
				(1-w_{n})^{\delta}  \min\{(1-v_n)^{\alpha -\delta}, (1-v_{n-1})^\alpha\}$.
				\item $\{W^*_n\}$  is a stationary Markovian process.
			\end{enumerate}
		\end{theorem}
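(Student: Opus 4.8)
The plan is to mirror the proof of Theorem~1, exploiting the identity $W^*_n = 1 - Y^*_n$ recorded after (3.4), where $\{Y^*_n\}$ is the A-M PFD process of (2.4). From Arnold et al. \cite{asm23} I may treat as known that $\{Y^*_n\}$ has $PFD(\alpha)$ marginals, is stationary, and is Markovian, with bivariate distribution function $F_{Y^*_{n-1},Y^*_n}(y_{n-1},y_n) = y_n^{\delta}\min\{y_{n-1}^{\alpha},\, y_n^{\alpha-\delta}\}$; this last formula can in any case be rederived in a line by conditioning on $U_n$ in the recursion $Y^*_n = \max\{(Y^*_{n-1})^{\alpha/(\alpha-\delta)}, U_n^{1/\delta}\}$. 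Everything then reduces to transporting these facts through the coordinatewise bijection $y \mapsto 1-y$.

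For part (a) I would write, for $0<w<1$,
\begin{eqnarray*}
F_{W^*_n}(w) = P(1 - Y^*_n \le w) = P(Y^*_n \ge 1-w) = 1 - (1-w)^{\alpha},
\end{eqnarray*}
which is exactly the $CPFD(\alpha)$ distribution function (3.2). For part (b) the approach is the inclusion--exclusion step already used in Theorem~1:
\begin{eqnarray*}
F_{W^*_{n-1},W^*_n}(w_{n-1},w_n) &=& P\big(Y^*_{n-1}\ge 1-w_{n-1},\ Y^*_n \ge 1-w_n\big)\\
&=& 1 - F_{Y^*_{n-1}}(1-w_{n-1}) - F_{Y^*_n}(1-w_n) + F_{Y^*_{n-1},Y^*_n}(1-w_{n-1},1-w_n).
\end{eqnarray*}
Substituting the marginal $F_{Y^*}(y)=y^{\alpha}$ and the bivariate formula above with $y_{n-1}=1-w_{n-1}$, $y_n=1-w_n$ gives $1-(1-w_{n-1})^{\alpha}-(1-w_n)^{\alpha} + (1-w_n)^{\delta}\min\{(1-w_{n-1})^{\alpha},(1-w_n)^{\alpha-\delta}\}$, which is the asserted expression (the $v$'s in the statement being typographical slips for $w$'s). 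This part is pure bookkeeping; the only care needed is in keeping the exponents $\alpha$, $\delta$, $\alpha-\delta$ straight.

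For part (c), stationarity is inherited directly: since $W^*_n = 1-Y^*_n$ and $\{Y^*_n\}$ is stationary, every finite-dimensional law of $\{W^*_n\}$ is the image of the corresponding law of $\{Y^*_n\}$ under the fixed map $(y_0,\dots,y_k)\mapsto(1-y_0,\dots,1-y_k)$, hence shift-invariant. For the Markov property I would argue straight from the recursion (3.4): $W^*_n$ is a deterministic function of $W^*_{n-1}$ and $U_n$, and $U_n$ is independent of $(W^*_0,\dots,W^*_{n-1})$, so $P(W^*_n \le w \mid W^*_0,\dots,W^*_{n-1}) = P(W^*_n \le w \mid W^*_{n-1})$; equivalently the Markov property transfers from $\{Y^*_n\}$ through the bijection $y\mapsto 1-y$. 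The one point worth an explicit sentence --- and essentially the only place the argument requires a moment's thought --- is why $\{W^*_n\}$ is genuinely Markovian while the Kundu process $\{V^*_n\}$ of Theorem~1 is not: (3.4) is a one-step recursion driven by a single fresh innovation $U_n$, whereas in (3.3) the variable $U_{n-1}$ is shared between $V^*_{n-1}$ and $V^*_n$, which destroys conditional independence of the future from the past given the present. Apart from that contrast, the whole proof is a direct transcription of the PFD results of \cite{asm23}.
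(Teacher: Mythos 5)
Your proposal is correct and follows essentially the same route as the paper: transport the marginal, bivariate (via inclusion--exclusion), stationarity, and Markov properties of the A-M PFD process $\{Y^*_n\}$ through the map $y\mapsto 1-y$, citing Arnold et al.\ for the needed facts about $Y^*_n$ (the paper derives the marginal by induction on the recursion rather than quoting it, a negligible difference). Your substitution in part (b), yielding $(1-w_n)^{\delta}\min\{(1-w_{n-1})^{\alpha},(1-w_n)^{\alpha-\delta}\}$, is in fact the cleanly stated version of what the paper writes with typographical slips, and your identification of the $v$'s in the statement as typos for $w$'s is right.
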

		
		\begin{proof}
			We use the property that , if $\{W_n^*\}$ is an A-M CPFD process  
			$W^*_n = 1- Y^*_n$, where $\{Y^*_n\}$ is an A-M PFD process.	To prove part $(a)$ we use the following notation
			\begin{eqnarray}
				F_{W^*_n}(w) &=& P(W^*_n \leq w) = P(1-Y^*_n  \leq w) = P(Y^*_n \geq 1-w) = 1- P(Y^*_n \leq 1-w) \nonumber \\
				&=& 1-(1-w)^\delta P(Y^*_{n-1} \leq (1-w)^{\frac{\alpha -\delta}{\alpha}})
			\end{eqnarray}
			Hence, using mathematical induction, we have
			\begin{eqnarray}
				F_{W^*_n}(w) = w^\alpha.
			\end{eqnarray}
			Now, for Part (b) of the theorem, the joint distribution of $W^*_n$ and $W^*_{n+1}$ is, by stationarity of the process, the same as the joint distribution of $W^*_0$ and $W^*_1$ which may be computed as follows:
			
			\begin{eqnarray}
				F_{W^*_0,W^*_1} (w_0,w_1) &=& P(W^*_0 \leq w_0, W^*_1 \leq w_1) \nonumber\\
				&=& P(1-Y^*_0 \leq w_0, 1-Y^*_1 \leq w_1)  \nonumber \\
				&=& P(Y^*_0 \geq 1-w_0, Y^*_1 \geq 1-w_1)  \nonumber \\
				&=& 1 - F_{Y^*_0}(1-w_0) - F_{Y^*_1}(1-w_1) + F_{Y^*_0,Y^*_1}(1-w_0,1-w_1) \nonumber \\
				&=& 1 - (1-w_0)^\alpha - (1-w_1)^\alpha + (1-w_1)^\delta \min\{ w^\alpha_0, w^{\alpha -\delta}_1\}.
			\end{eqnarray}
			It remains to prove Part (c). Stationarity is trivial from the definition of the process. For the Markovian property, recall that $W_n^{*}=1-Y^{*}_n$, we know that $Y^*_n$ is Markovian, and $W^*_n$ is invertible transformation of a Markovian process, hence $W^*_n$ is also  Markovian.
		\end{proof}
		\begin{corollary}
			If the sequence of random variable $\{W^*_n\}$ is as defined in equation $(3.4)$, then the mean and variance are 
			\begin{eqnarray}
				E(W^*_n) &=&   \frac{1}{\alpha +1} \\
				Var(W^*_n) &=& \frac{\alpha}{(\alpha +1)^2(\alpha +2)}.
			\end{eqnarray}
		\end{corollary}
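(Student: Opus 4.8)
The plan is to exploit the representation $W^*_n = 1 - Y^*_n$ already recorded in the excerpt, where $\{Y^*_n\}$ is the A-M PFD process, together with part (a) of the immediately preceding theorem, which guarantees $W^*_n \sim CPFD(\alpha)$ for every $n$, equivalently $Y^*_n \sim PFD(\alpha)$. Since a $PFD(\alpha)$ variable is a $Beta(\alpha,1)$ variable, its moments are classical, and the whole computation reduces to transcribing those moments through the affine map $w = 1 - y$, exactly mirroring the proof of the Kundu corollary but with $\beta$ set to $0$.

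First I would handle the mean. By linearity, $E(W^*_n) = 1 - E(Y^*_n)$. Because $Y^*_n \sim Beta(\alpha,1)$ we have $E(Y^*_n) = \alpha/(\alpha+1)$ (equivalently, integrate $y\cdot\alpha y^{\alpha-1}$ over $(0,1)$), so $E(W^*_n) = 1 - \alpha/(\alpha+1) = 1/(\alpha+1)$.

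Next, for the variance I would use translation invariance: $Var(W^*_n) = Var(1 - Y^*_n) = Var(Y^*_n)$. The second moment of a $Beta(\alpha,1)$ variable is $E((Y^*_n)^2) = \alpha/(\alpha+2)$, whence $Var(Y^*_n) = \alpha/(\alpha+2) - \bigl(\alpha/(\alpha+1)\bigr)^2 = \alpha/\bigl((\alpha+1)^2(\alpha+2)\bigr)$ after clearing denominators (the numerator collapses to $\alpha$). This yields $Var(W^*_n) = \alpha/\bigl((\alpha+1)^2(\alpha+2)\bigr)$.

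There is essentially no obstacle here; the only point needing a moment's care is to invoke part (a) of the preceding theorem first, so that the marginal is genuinely $CPFD(\alpha)$ for all $n$ and not merely for $n=0$, and to note that the variance is unchanged under $y \mapsto 1-y$ so that no cross terms intervene. Alternatively one could integrate directly against the $Beta(1,\alpha)$ density $\alpha(1-w)^{\alpha-1}$ on $(0,1)$; both routes give the same two-line computation.
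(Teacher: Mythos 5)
Your proposal is correct and follows essentially the same route as the paper: both use $W^*_n = 1 - Y^*_n$ with $Y^*_n \sim PFD(\alpha)$ and transcribe the known $Beta(\alpha,1)$ moments through the map $w = 1-y$, using linearity for the mean and translation/reflection invariance for the variance. Your version merely spells out the Beta moment integrals that the paper takes as known.
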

		\begin{proof}
			We  known that 
			\begin{eqnarray}
				E(W^*_n) &=& E(1-Y^*_n) = 1 - E(Y^*_n) = 1 - \frac{\alpha}{\alpha +1} = \frac{1}{\alpha +1} \\
				Var(W^*_n) &=& Var(1-Y^*_n) = Var(Y^*_n) = \frac{\alpha}{(\alpha +1)^2(\alpha +2)}.
			\end{eqnarray}
		\end{proof}
		
		If the sequence of random variables $\{W^*_n\}$ is as defined in equation $(3.4)$ then the auto-correlation function will be 
		\begin{eqnarray}
			Corr(W^*_0, W^*_1) = \frac{E(W^*_0 W^*_1) - \big(\frac{1}{\alpha+1}\big)^2}{\frac{\alpha}{(\alpha +1)^2(\alpha +2)}}.
		\end{eqnarray}
		which is monotone in $\delta$.

		The only thing needed in order to get the auto-correlation is $E(W^*_0 W^*_1)$. Now,
		\begin{eqnarray}
			E(W^*_0 W^*_1) &=& E((1-Y^*_0)(1-Y^*_1) ) = 1- E(Y^*_0) - E(Y^*_1) + E(Y^*_0 Y^*_1) \nonumber \\
			&=& 1- 2 \frac{1}{\alpha +1} + \frac{\delta}{\delta +1} \Bigg[ \frac{\alpha}{\alpha +1} - \Bigg(\frac{1}{\alpha}+\frac{\delta+1}{\alpha-\delta} +1 \Bigg)^{-1}\Bigg] + \Bigg[\frac{1}{\alpha} + \frac{1+ \delta}{\alpha -\delta} +1\Bigg]^{-1}. \nonumber \\
		\end{eqnarray}
		For the computation of $E(Y^*_0 Y^*_1)$ we refer to page 18 of Arnold et al. \cite{asm23}.
		
		\begin{theorem}
			If the sequence of random variables $\{W^*_n\}$ is as defined in equation $(3.4)$ then 
			\begin{eqnarray}
				P(W^*_1 < W^*_1) = \frac{\delta}{\alpha + \delta}.
			\end{eqnarray}	
		\end{theorem}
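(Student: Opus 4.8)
We read the displayed $P(W^*_1<W^*_1)$ as an evident typographical slip for $P(W^*_1<W^*_2)$, the one-step comparison, exactly parallel to the quantity $P(V^*_1<V^*_2)$ treated above for the Kundu process.

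The plan is to move the computation onto the A-M PFD process $\{Y^*_n\}$ by means of the identity $W^*_n=1-Y^*_n$, and then to use the explicit recursion $Y^*_2=\max\{(Y^*_1)^{\alpha/(\alpha-\delta)},U_2^{1/\delta}\}$ together with the $PFD(\alpha)$ marginal of $Y^*_1$. First I would translate $\{W^*_1<W^*_2\}$ through $W=1-Y$ into the corresponding comparison of the underlying PFD variables $Y^*_1$ and $Y^*_2$. The event is then governed entirely by the fresh innovation $U_2$: using the elementary inequality $t^{\alpha/(\alpha-\delta)}<t$, valid for all $t\in(0,1)$ because $\alpha/(\alpha-\delta)>1$ whenever $0<\delta<\alpha$, the deterministic term $(Y^*_1)^{\alpha/(\alpha-\delta)}$ is seen to lie strictly below $Y^*_1$ and hence never decides the maximum. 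The comparison therefore collapses to the single condition $U_2>(Y^*_1)^{\delta}$.

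Next, since $U_2$ is uniform on $(0,1)$ and independent of $Y^*_1$, I would condition on $Y^*_1$, evaluate the conditional probability of that condition as $1-(Y^*_1)^{\delta}$, and integrate against the $PFD(\alpha)$ density $\alpha y^{\alpha-1}$ on $(0,1)$. This reduces everything to the elementary Beta-type integral $\int_0^1\bigl(1-y^{\delta}\bigr)\alpha y^{\alpha-1}\,dy=1-\frac{\alpha}{\alpha+\delta}=\frac{\delta}{\alpha+\delta}$, which is the asserted value.

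The step that requires the most care, and the one I expect to be the main obstacle, is the orientation bookkeeping: one must follow the direction of the inequality faithfully through the strictly decreasing map $W=1-Y$ and then through the maximum in the recursion. A single sign slip would merely interchange $\frac{\alpha}{\alpha+\delta}$ with its complement $\frac{\delta}{\alpha+\delta}$, so the direction of each comparison has to be tracked with some care in order to land precisely on the probability recorded in the theorem.
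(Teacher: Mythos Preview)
Your reading of the typo as $P(W^*_1<W^*_2)$ does not match the paper: the paper's own proof makes clear that the intended quantity is $P(W^*_1<W^*_0)$, which it rewrites as $P(Y^*_1>Y^*_0)$ and then evaluates simply by citing the companion paper~\cite{asm23}.

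More importantly, the orientation bookkeeping---exactly the step you flag as delicate---contains a slip. Through $W=1-Y$ one has $\{W^*_1<W^*_2\}=\{Y^*_1>Y^*_2\}$, and since $(Y^*_1)^{\alpha/(\alpha-\delta)}<Y^*_1$ this event is
\[
\{Y^*_1>U_2^{1/\delta}\}=\{U_2<(Y^*_1)^{\delta}\},
\]
not $\{U_2>(Y^*_1)^{\delta}\}$ as you write. Hence
\[
P(W^*_1<W^*_2)=\int_0^1 y^{\delta}\,\alpha y^{\alpha-1}\,dy=\frac{\alpha}{\alpha+\delta},
\]
and the integral you evaluate is in fact $P(W^*_2<W^*_1)$, which by stationarity equals the paper's $P(W^*_1<W^*_0)=\delta/(\alpha+\delta)$. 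So two errors---the wrong resolution of the typo and the reversed inequality---cancel, and you land on the stated number without having proved the statement you set out to prove.

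Once the orientation is fixed (or, equivalently, once you adopt the paper's reading $P(W^*_1<W^*_0)$ and shift by stationarity to $P(W^*_2<W^*_1)$), your direct computation via the fresh innovation $U_2$ and the $PFD(\alpha)$ marginal is perfectly valid and indeed more explicit than the paper's argument, which stops at $P(Y^*_1>Y^*_0)$ and defers the actual evaluation to the cited reference.
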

		\begin{proof}
			By the definition of the process we have $W_1^*= 1-Y_1^*$, consequently $P(W^*_1< W^*_1)$ is given by
			\begin{eqnarray}
				P(W^*_1 < W^*_0) = P(1-Y^*_1< 1-Y^*_0)= P(Y^*_1 > Y^*_0) = \frac{\delta}{\alpha +\delta}
			\end{eqnarray}
			For the computation of $P(Y^*_1  >Y^*_0)$ we refer to page 19 of Arnold et al. \cite{asm23}.
		\end{proof}

		
		\subsection{Kundu Pareto process}
		\begin{theorem}
			If the sequence of random variables $\{S^*_n\}$ is as defined in equation $(3.9)$, then the following statements are true:
			\begin{enumerate}[label=(\alph*)]
				\item For each n, $\{S^*_n\}$ has a Pareto$(\alpha+\beta)$ distribution. 
				\item The joint distribution function of $(S^*_{n-1}, S_n^*)$ is 
				\begin{eqnarray}1- \Big (\frac{s_{n-1}}{\sigma}\Big )^{-(\alpha +\beta)} -\Big (\frac{s_n}{\sigma}\Big )^{-(\alpha +\beta)} +     \Big (\frac{s_n}{\sigma}\Big )^{-\alpha} \Big (\frac{s_{n-1}}{\sigma}\Big )^{-\beta}\min \Bigg\{  \Big (\frac{s_n}{\sigma}\Big )^{-\alpha}, \Big (\frac{s_{n-1}}{\sigma}\Big )^{-\beta}\Bigg\}. \nonumber
				\end{eqnarray}
				\item $\{S^*_n\}$  is a stationary non-Markovian process.
			\end{enumerate}
		\end{theorem}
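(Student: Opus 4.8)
The plan is to exploit the same device used throughout Section 4, namely the identity relating the Kundu Pareto process to the Kundu PFD (equivalently CPFD) process. Recall from Subsection 3.4 that $S^*_n = \sigma / X^*_n$ where $\{X^*_n\}$ is the Kundu PFD process, and hence also $S^*_n = \sigma/(1-V^*_n)$ where $\{V^*_n\}$ is the Kundu CPFD process of equation $(3.3)$. Since the map $x \mapsto \sigma/x$ is a monotone (decreasing) bijection from $(0,1)$ onto $(\sigma,\infty)$, all the structural conclusions transfer from the already-established Theorem on $\{V^*_n\}$: stationarity is preserved under a fixed marginal transformation applied to every coordinate, and the non-Markovian property is preserved because the transformation is invertible (a process is Markov iff any fixed invertible coordinatewise transform of it is Markov). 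This settles part (c) immediately from part (c) of the Kundu CPFD theorem.

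For part (a) I would compute the survival function directly: for $s > \sigma$,
\begin{eqnarray*}
P(S^*_n > s) &=& P(\sigma/X^*_n > s) = P\big(X^*_n < \sigma/s\big) = (\sigma/s)^{\alpha+\beta} = (s/\sigma)^{-(\alpha+\beta)},
\end{eqnarray*}
using part (a) of the Kundu CPFD/PFD result that $X^*_n \sim PFD(\alpha+\beta)$, i.e. $F_{X^*_n}(x) = x^{\alpha+\beta}$ on $(0,1)$. This is exactly the $P(I)(\sigma,\alpha+\beta)$ survival function, proving part (a).

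For part (b) I would again pass through the PFD marginals. For $s_{n-1}, s_n > \sigma$, writing $u_{n-1} = \sigma/s_{n-1} \in (0,1)$ and $u_n = \sigma/s_n \in (0,1)$,
\begin{eqnarray*}
F_{S^*_{n-1},S^*_n}(s_{n-1},s_n) &=& P(\sigma/X^*_{n-1} \le s_{n-1},\ \sigma/X^*_n \le s_n) \\
&=& P(X^*_{n-1} \ge u_{n-1},\ X^*_n \ge u_n) \\
&=& 1 - F_{X^*_{n-1}}(u_{n-1}) - F_{X^*_n}(u_n) + F_{X^*_{n-1},X^*_n}(u_{n-1},u_n),
\end{eqnarray*}
and then substitute the known bivariate distribution of $(X^*_{n-1},X^*_n)$ from Arnold et al.\ \cite{asm23} (equivalently, read it off from part (b) of the Kundu CPFD theorem with $1-v_j$ replaced by $u_j$). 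Plugging in $u_j = (s_j/\sigma)^{-1}$ and simplifying $F_{X^*_{n-1}}(u_{n-1}) = u_{n-1}^{\alpha+\beta} = (s_{n-1}/\sigma)^{-(\alpha+\beta)}$, etc., yields exactly the claimed expression. The only mildly delicate point is bookkeeping of which exponent ($\alpha$ versus $\beta$) attaches to which coordinate inside the $\min$, so I would fix the ordering convention once (matching equation $(3.9)$, where $U_{n-1}$ carries $\alpha$ and $U_n$ carries $\beta$) and carry it through consistently; beyond that the argument is a routine change of variables with no real obstacle.
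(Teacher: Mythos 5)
Your proposal is correct and follows essentially the same route as the paper: it transfers all three parts through the invertible marginal transformation $S^*_n=\sigma/X^*_n$, computing the marginal via $P(X^*_n\le \sigma/s)$ and the bivariate distribution via the inclusion--exclusion identity with the known joint distribution of the Kundu PFD process from Arnold et al.\ \cite{asm23}, and deducing stationarity and the non-Markovian property from the corresponding properties of $\{X^*_n\}$. No substantive differences from the paper's argument.
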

		
		\begin{proof}
			If $S^*_n$ is a Pareto process, then $S^*_n=\sigma/X^*_n$, where $X^*_n$ is a stationary non-Markovian PFD process (see Arnold et al. \cite{asm23}). Hence, the stationarity and non-Markovian nature of $S^*_n$ follows. 
			To prove part $(a)$ we use the following notation
			\begin{eqnarray}
				F_{S^*_n}(s_n) = P(S_n^* \leq s_n) &=& P(\sigma/ X^*_n \leq s_n) \nonumber \\
				&=& P(X^*_n \geq \sigma/s_n) \nonumber \\
				&=& 1- P(X^*_n \leq \sigma/ s_n) \nonumber \\
				&=& 1-\Big (\frac{s_n}{\sigma}\Big )^{-(\alpha +\beta)}
			\end{eqnarray}
			For the proof of Part $(b)$, we refer to $S^*n=\sigma/X^*_n$ and the analogous result for $S^*_n$ in Arnold et al. \cite{asm23}. Now, the joint distribution of $S^*_{n-1}$ and $S^*_n$,
			\begin{eqnarray}
				F_{S^*_n, S^*_{n-1}} (s_n,s_{n-1} ) &=& P(S^*_n \leq s_n, S^*_{n-1} \leq s_{n-1}) \nonumber\\
				&=& P(\sigma/X^*_n \leq s_n, \sigma/X^*_{n-1} \leq s_{n-1})  \nonumber\\
				&=& P(X^*_n \geq \sigma/s_n, X^*_{n-1} \geq  \sigma/s_{n-1}) \nonumber \\
				&=& 1- F_{X^*_{n-1}}(\sigma/s_{n-1}) -F_{X^*_{n}}(\sigma /s_n) + F_{X^*_n,X^*_{n-1}}(\sigma/s_n,\sigma/s_{n-1})
				\nonumber\\
				&=& 1- \Big (\frac{s_{n-1}}{\sigma}\Big )^{-(\alpha +\beta)} -\Big (\frac{s_n}{\sigma}\Big )^{-(\alpha +\beta)} +   \nonumber\\ &&   \Big (\frac{s_n}{\sigma}\Big )^{-\alpha} \Big (\frac{s_{n-1}}{\sigma}\Big )^{-\beta}\min \Bigg\{  \Big (\frac{s_n}{\sigma}\Big )^{-\alpha}, \Big (\frac{s_{n-1}}{\sigma}\Big )^{-\beta}\Bigg\}.
			\end{eqnarray}
		\end{proof}
		
		\begin{corollary}
			If the sequence of random variables $\{S^*_n\}$ is as defined in equation $(3.9)$, then the mean (for $\alpha + \beta >1$)and variance (for $\alpha  +\beta >2$) are
			\begin{eqnarray}
				E(S^*_n) &=& \frac{\sigma(\alpha+\beta)}{\alpha +\beta -1} \\
				Var(S^*_n) &=& \frac{\sigma^2(\alpha +\beta)}{(\alpha + \beta -1)^2(\alpha +\beta -2)}	\end{eqnarray}
			
			\begin{proof}
				We know that 
				\begin{eqnarray}
					E(S^*_n) &=& E( \sigma/ X^*_n) = \sigma E(1/X^*_n) = \sigma \frac{\alpha+\beta}{\alpha + \beta -1} \\
					Var(S^*_n) &=&  Var(\sigma/X^*_n) =  \sigma^2 Var(1/X_n^*) = \sigma^2 \frac{\alpha +\beta}{(\alpha +\beta -1)^2 (\alpha +\beta -2)}.
				\end{eqnarray}
			\end{proof}
		\end{corollary}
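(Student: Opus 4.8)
The plan is to reduce both moments to a single one‑dimensional integral against a Beta density. By the preceding theorem, each $S^*_n$ has a $P(I)(\sigma,\alpha+\beta)$ marginal; equivalently, since $S^*_n=\sigma/X^*_n$ and $X^*_n=\max\{U_n^{1/\alpha},U_{n-1}^{1/\beta}\}$ is the max of independent $PFD(\alpha)$ and $PFD(\beta)$ variables, property (A) gives $X^*_n\sim PFD(\alpha+\beta)$, i.e. a $Beta(\alpha+\beta,1)$ variable. In particular the marginal law is the same for every $n$, so neither the mean nor the variance depends on $n$, and it suffices to work with a single $PFD(\alpha+\beta)$ variable $X$.

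First I would compute the negative moments of $X$. Writing $\gamma=\alpha+\beta$ and using the $PFD(\gamma)$ density $f(x)=\gamma x^{\gamma-1}$ on $(0,1)$,
\[
E\big(X^{-k}\big)=\gamma\int_0^1 x^{\gamma-1-k}\,dx=\frac{\gamma}{\gamma-k},
\]
which is finite precisely when $\gamma>k$. Taking $k=1$ yields $E(1/X)=\gamma/(\gamma-1)$ (needing $\alpha+\beta>1$) and $k=2$ yields $E(1/X^2)=\gamma/(\gamma-2)$ (needing $\alpha+\beta>2$); these are exactly the hypotheses in the statement.

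Then I would assemble the answer. From $S^*_n=\sigma/X$ we get $E(S^*_n)=\sigma E(1/X)=\sigma\gamma/(\gamma-1)$, which is the claimed mean with $\gamma=\alpha+\beta$. For the variance,
\[
Var(S^*_n)=\sigma^2\Big(E(1/X^2)-\big(E(1/X)\big)^2\Big)=\sigma^2\Big(\frac{\gamma}{\gamma-2}-\frac{\gamma^2}{(\gamma-1)^2}\Big),
\]
and after clearing denominators the bracket collapses to $\gamma/[(\gamma-1)^2(\gamma-2)]$, using the one‑line identity $(\gamma-1)^2-\gamma(\gamma-2)=1$; substituting $\gamma=\alpha+\beta$ gives the stated formula. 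There is no genuine obstacle here: the only points requiring care are the parameter ranges for which the integrals converge (hence the conditions $\alpha+\beta>1$ and $\alpha+\beta>2$) and that short algebraic simplification of the variance. One could instead quote the classical Pareto(I) moment formulas, but deriving them from the $Beta(1,\alpha+\beta)$ representation keeps the argument self‑contained and mirrors the proof of the analogous corollary for the Kundu CPFD process.
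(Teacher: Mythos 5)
Your proposal is correct and follows essentially the same route as the paper: express $S^*_n=\sigma/X^*_n$ with $X^*_n\sim PFD(\alpha+\beta)$ and read off the mean and variance from the negative moments of that variable. The only difference is that you derive $E(X^{-k})=\gamma/(\gamma-k)$ explicitly from the $Beta(\alpha+\beta,1)$ density (which also justifies the conditions $\alpha+\beta>1$ and $\alpha+\beta>2$), whereas the paper simply quotes the values of $E(1/X^*_n)$ and $Var(1/X^*_n)$.
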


		If the sequence of random variables $\{S^*_n\}$ is as defined in equation $(3.9)$, the auto-correlation function is
		\begin{eqnarray}
			Corr(S^*_n, S^*_{n-1}) &=& \frac{Cov(S^*_n, S^*_{n-1})}{\sqrt{Var(S^*_n) Var(S^*_{n-1})}} \nonumber \\
			&=&  \frac{E(S^*_n S^*_{n-1})- \big(\sigma \frac{\alpha+\beta}{\alpha + \beta -1}\big)^2}{\sigma^2 \frac{\alpha +\beta}{(\alpha +\beta -1)^2 (\alpha +\beta -2)}}
		\end{eqnarray}
		
		From the usual expression for the auto-correlation in $(4.7)$, it is evident that the detail lacking is an expression for $E(S^*_n S^*_{n-1})$.
		To evaluate this, we argue as follows
		\begin{eqnarray}
			E(S^*_n S^*_{n-1}) &=& E((\sigma/X^*_n)(\sigma/X^*_{n-1}))  \nonumber \\
			&=& \sigma^2 E(1/X^*_{n-1} 1/X^*_{n})  \nonumber \\.
		\end{eqnarray}
		For the computation of $E(1/X^*_{n-1} 1/X^*_{n})$ 
		\begin{eqnarray}
			E(1/X^*_{n-1} 1/X^*_{n}) &=&E(V^{-1/\beta}W^{-1/\beta} I(U<V^{\alpha/\beta},W>V^{\beta/\alpha} ) ) \nonumber\\ 
			& &+E(V^{-1/\beta}V^{-1/\alpha} I(U<V^{\alpha/\beta},W<V^{\beta/\alpha})) \nonumber \\
			& &+E(U^{-1/\alpha}W^{-1/\beta} I(U>V^{\alpha/\beta},W>V^{\beta/\alpha}) )\nonumber \\
			& &+E(U^{-1/\alpha}V^{-1/\beta} I(U>V^{\alpha/\beta},W<V^{\beta/\alpha}) ) \nonumber\\
			& & = A+B+C+D
		\end{eqnarray}
		where
		\begin{eqnarray*}
			X^*_n &=&\max\{U^{1/\alpha},V^{1/\beta} \}\\
			X^*_{n-1} &=&\max\{V^{1/\alpha},W^{1/\beta} \}
		\end{eqnarray*}
		and expressions for A, B, C, and D are as follows.
		\begin{eqnarray}
			A &=&\int_0^1  dv \int_{0}^{v^{\alpha/\beta}}du \int_{v^{\beta/\alpha}}^{1} dw \ [v^{-1/\beta} w^{-1/\beta} ]  \nonumber\\
			&=& \frac{\beta^2}{\beta-1} \Bigg[ \frac{1}{\alpha + \beta-1} - \frac{\alpha}{\alpha^2+\beta^2  +  \alpha \beta -\alpha- \beta}\Bigg], \nonumber \
		\end{eqnarray}
		
		\begin{eqnarray*}
			B &=&\int_0^1  dv \int_{0}^{v^{\alpha/\beta}}du \int_{0}^{v^{\beta/\alpha}} dw \ [v^{-1/\alpha} v^{-1/\beta} ]  \nonumber\\
			&=& \left[\frac{\alpha^2+\beta^2+\alpha\beta - \alpha-\beta}{\alpha\beta}\right]^{-1},\nonumber\\
		\end{eqnarray*}
		
		\begin{eqnarray*}
			C &=&\int_0^1  dv \int_{v^{\alpha/\beta}}^{1}du \int_{v^{\beta/\alpha}}^{1} dw \ [u^{-1/\alpha} w^{-1/\beta} ]  \nonumber\\
			&=& \frac{\alpha\beta}{(\beta - 1)(\alpha-1)} \Bigg[ 1 - \frac{ \beta}{\alpha + \beta - 1 } - \frac{ \alpha}{\alpha + \beta - 1 }
			+ \frac{\alpha \beta}{1 +  2\alpha \beta - \alpha - \beta} \Bigg], \nonumber\\
		\end{eqnarray*}
		
		\begin{eqnarray*}
			D &=&\int_0^1dv \int_{v^{\alpha/\beta}}^{1}du \int_{0}^{v^{\beta/\alpha}}dw \ [v^{-1/\alpha} u^{-1/\beta} ]  \nonumber\\
			&=& \frac{\alpha^2 }{\alpha - 1} \Bigg[ \frac{1}{ \alpha + \beta -1} - \frac{\beta}{\alpha^2 + \beta^2 - \alpha-\beta}\Bigg]. \nonumber
		\end{eqnarray*}

		
		\begin{theorem}
			If the sequence of random variable $\{S^*_n\}$ is as defined in equation $(3.9)$ then 
			\begin{eqnarray}
				P(S^*_1<S^*_2) = \left\{
				\begin{array}{ll}
					\frac{\alpha + \beta}{2\alpha+\beta} & \mbox{if } \alpha > \beta \\
					\frac{\beta}{2\beta+\alpha} & \mbox{if } \alpha \leq \beta. 
				\end{array}
				\right.
			\end{eqnarray}
		\end{theorem}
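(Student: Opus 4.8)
The plan is to reduce the assertion to a comparison that has already been evaluated for the underlying Kundu PFD process. By the definition of the Kundu Pareto process in Subsection 3.4 we have $S^*_n = \sigma/X^*_n$ for every $n$, where $\{X^*_n\}$ is the Kundu PFD process of equation (2.3). Since $x \mapsto \sigma/x$ is strictly decreasing on $(0,\infty)$, the event $\{S^*_1 < S^*_2\}$ coincides with the event $\{X^*_1 > X^*_2\}$, and hence
$$P(S^*_1 < S^*_2) = P(X^*_1 > X^*_2) = P(1-X^*_1 < 1-X^*_2) = P(V^*_1 < V^*_2),$$
where $\{V^*_n\}$ is the Kundu CPFD process of equation (3.3) and we used $V^*_n = 1-X^*_n$. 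The right-hand side is precisely the quantity computed in the earlier theorem evaluating $P(V^*_1 < V^*_2)$ for the Kundu CPFD process, so the claimed piecewise formula is immediate.

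For completeness I would also recall how $P(X^*_1 > X^*_2)$ is obtained directly, since this is where the dichotomy $\alpha > \beta$ versus $\alpha \le \beta$ actually arises. Write $X^*_1 = \max\{U_1^{1/\alpha}, U_0^{1/\beta}\}$ and $X^*_2 = \max\{U_2^{1/\alpha}, U_1^{1/\beta}\}$ with $U_0,U_1,U_2$ i.i.d. uniform$(0,1)$, and condition on the common variable $U_1$. Given $U_1=u$, the variable $X^*_1$ is the maximum of the constant $u^{1/\alpha}$ and an independent $PFD(\beta)$ variable, while $X^*_2$ is the maximum of the constant $u^{1/\beta}$ and an independent $PFD(\alpha)$ variable; integrating the conditional probability of $\{X^*_1 > X^*_2\}$ over $u\in(0,1)$ and breaking the integral according to the relative size of $u^{1/\alpha}$ and $u^{1/\beta}$, which is governed by whether $\alpha > \beta$ or $\alpha \le \beta$, yields the two branches. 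This is the computation carried out for the PFD process in Arnold et al. \cite{asm23} and already invoked in the proof of the CPFD analogue above.

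The one point requiring care is the reversal of the inequality: because $S^*_n$ is a decreasing function of $X^*_n$, the event $\{S^*_1 < S^*_2\}$ becomes $\{X^*_1 > X^*_2\}$, so one must be sure to read off $P(X^*_1 > X^*_2) = P(V^*_1 < V^*_2)$ and not its complement. Beyond this bookkeeping there is no genuine analytic obstacle, and the case split is entirely inherited from the previously established PFD/CPFD results.
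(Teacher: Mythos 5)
Your argument is correct and is essentially the paper's own proof: the paper likewise uses $S^*_n=\sigma/X^*_n$ and the fact that $x\mapsto\sigma/x$ is decreasing to write $P(S^*_1<S^*_2)=P(X^*_1>X^*_2)$ and then quotes the known piecewise values for the Kundu PFD process (equivalently, the CPFD result $P(V^*_1<V^*_2)$ proved earlier). Your additional sketch of the conditioning-on-$U_1$ computation is consistent with the cited source but not part of the paper's argument, which simply invokes it.
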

		\begin{proof}
			By definition of the process, we have $S^*_1 = \sigma/X^*_1$ and $S^*_2 = \sigma/X^*_2$. 
			\begin{eqnarray}
				P(S^*_1 < S^*_2) = P(\sigma/X^*_1 < \sigma/X^*_2) =P(X^*_1 > X^*_2)
			\end{eqnarray}
			So, if $\alpha > \beta$ we have
			\begin{eqnarray}
				P(S^*_1 < S^*_2) =\frac{\alpha + \beta }{2 \alpha + \beta}
			\end{eqnarray}
			Now, for $\alpha < \beta$
			\begin{eqnarray}
				P(S^*_1 > S^*_2) =\frac{\alpha + \beta}{2 \beta + \alpha}
			\end{eqnarray}
		\end{proof}

		
		\subsection{A-M  Pareto process}
		\begin{theorem}
			If the sequence of random variables $\{T^*_n\}$ is as defined in equation $(3.10)$, then the following statements are true:
			\begin{enumerate}[label=(\alph*)]
				\item For each n, $\{T^*_n\}$ has a Pareto$(\sigma,\alpha)$ distribution. 
				\item The joint distribution function of $(T^*_{n-1}, T_n^*)$ is 
				\begin{eqnarray}
					1 - (w_0/\sigma)^{-\alpha} - (w_1/\sigma)^{-\alpha} + (w_1/\sigma)^{-\delta} \min\{ (w_0/\sigma)^{-\alpha} , (w_1/\sigma)^{-(\alpha /\delta)}\} \nonumber.
				\end{eqnarray}
				\item $\{T^*_n\}$  is a stationary Markovian process.
			\end{enumerate}
		\end{theorem}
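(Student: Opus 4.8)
The plan is to piggy-back on the A-M PFD process $\{Y^*_n\}$ via the identity $T^*_n=\sigma/Y^*_n$. This is the exact counterpart of the relation $S^*_n=\sigma/X^*_n$ used for the Kundu Pareto process in Section 4.3, and it is forced by the general recipe $T^*_n=F_0^{-1}(W^*_n)=\sigma\big(1-W^*_n\big)^{-1}$ of Section 3.3 together with $W^*_n=1-Y^*_n$. So the first step is to check that substituting $T^*_n=\sigma/Y^*_n$ turns the recursion in $(3.10)$ into the A-M PFD recursion $Y^*_n=\max\big\{(Y^*_{n-1})^{\alpha/(\alpha-\delta)},U_n^{1/\delta}\big\}$, and that the initial condition $T^*_{-1}\sim P(I)(\sigma,\alpha)$ is equivalent to $Y^*_{-1}\sim PFD(\alpha)$; once this is done, every assertion about $\{T^*_n\}$ can be read off from the corresponding assertion about $\{Y^*_n\}$, available from Arnold et al. \cite{asm23}.

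Part (a) is then immediate: since $Y^*_n\sim PFD(\alpha)$ for every $n$, for $t>\sigma$ we get $P(T^*_n\le t)=P(\sigma/Y^*_n\le t)=P\big(Y^*_n\ge \sigma/t\big)=1-(\sigma/t)^{\alpha}=1-(t/\sigma)^{-\alpha}$, i.e. $T^*_n\sim P(I)(\sigma,\alpha)$. For part (b) I would use stationarity to reduce to the pair $(T^*_0,T^*_1)$ and then, exactly as in the proof for $\{S^*_n\}$, write
\[
F_{T^*_0,T^*_1}(t_0,t_1)=P\big(Y^*_0\ge \sigma/t_0,\ Y^*_1\ge \sigma/t_1\big)=1-F_{Y^*_0}(\sigma/t_0)-F_{Y^*_1}(\sigma/t_1)+F_{Y^*_0,Y^*_1}(\sigma/t_0,\sigma/t_1).
\]
Here $F_{Y^*_0}(y)=F_{Y^*_1}(y)=y^{\alpha}$, and the joint law of the A-M PFD process is $F_{Y^*_0,Y^*_1}(y_0,y_1)=y_1^{\delta}\min\big\{y_0^{\alpha},\,y_1^{\alpha-\delta}\big\}$, obtained by conditioning on $Y^*_0$ in $Y^*_1=\max\{(Y^*_0)^{\alpha/(\alpha-\delta)},U_1^{1/\delta}\}$ and using that the event $U_1^{1/\delta}\le y_1$ is independent of $Y^*_0$. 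Substituting $y_i=\sigma/t_i$ and simplifying produces the stated fourth term, with $y_1^{\alpha-\delta}$ turning into $(t_1/\sigma)^{-(\alpha-\delta)}$ inside the minimum.

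For part (c), stationarity of $\{T^*_n\}$ follows from the strict stationarity of $\{Y^*_n\}$ because $y\mapsto\sigma/y$ is a fixed bijection applied coordinatewise; and since $\{Y^*_n\}$ is a time-homogeneous Markov chain and $y\mapsto\sigma/y$ is a bimeasurable bijection of $(0,1)$ onto $(\sigma,\infty)$, the transported chain $\{T^*_n\}$ is again Markov. This is the same argument already used for $\{W^*_n\}$ in Section 4.2.

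I do not expect a genuine obstacle here: the statement is a transfer result, and every line parallels the proofs already given for the Kundu Pareto process and the A-M CPFD process. The only point that needs real attention is the exponent bookkeeping in part (b) — keeping track of which exponent survives inside the $\min$ after the change of variable $y_i=\sigma/t_i$, and quoting the A-M PFD joint distribution function from \cite{asm23} in the correct form.
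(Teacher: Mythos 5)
Your proposal is correct and takes essentially the same route as the paper's own proof: the paper likewise works through the identity $T^*_n=\sigma/Y^*_n$, obtains part (a) from the marginal law of the A-M PFD process (via the recursion and induction), gets part (b) by the same inclusion--exclusion on the joint survival probability using $F_{Y^*_0,Y^*_1}(y_0,y_1)=y_1^{\delta}\min\{y_0^{\alpha},y_1^{\alpha-\delta}\}$ with $y_i=\sigma/t_i$, and proves part (c) by transferring stationarity and the Markov property through the invertible map $y\mapsto\sigma/y$. Your exponent $(t_1/\sigma)^{-(\alpha-\delta)}$ inside the minimum agrees with what the paper's proof actually derives; the exponent $-(\alpha/\delta)$ appearing in the theorem statement is a typographical slip.
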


		\begin{proof}
			We use the property that , if $\{T_n^*\}$ is an A-M Pareto process  
			$T^*_n = \sigma/Y^*_n$, where $\{Y^*_n\}$ is an A-M PFD process.	To prove part $(a)$ we use the following notation
			\begin{eqnarray}
				F_{T^*_n}(t) &=& P(T^*_n \leq t) = P(\sigma/Y^*_n  \leq t) = P(Y^*_n \geq \sigma/t) = 1- P(Y^*_n \leq \sigma/t) \nonumber \\
				&=& 1-(\sigma/t)^\delta P\Big(Y^*_{n-1} \leq (\sigma/t)^{\frac{\alpha -\delta}{\alpha}}\Big)
			\end{eqnarray}
			Hence, using mathematical induction, we have
			\begin{eqnarray}
				F_{T^*_n}(t) = \Big(\frac{t}{\sigma} \Big)^{-\alpha}.
			\end{eqnarray}
			Now, for Part (b) of the theorem, the joint distribution of $T^*_n$ and $T^*_{n+1}$ is, by stationarity of the process, the same as the joint distribution of $T^*_0$ and $T^*_1$ which may be computed as follows:
			
			\begin{eqnarray}
				F_{T^*_0,T^*_1} (t_0,t_1) &=& P(T^*_0 \leq t_0, T^*_1 \leq t_1) \nonumber\\
				&=& P(\sigma/Y^*_0 \leq w_0, \sigma/Y^*_1 \leq w_1)  \nonumber \\
				&=& P(Y^*_0 \geq \sigma/w_0, Y^*_1 \geq \sigma/w_1)  \nonumber \\
				&=& 1 - F_{Y^*_0}(\sigma/w_0) - F_{Y^*_1}(\sigma/w_1) + F_{Y^*_0,Y^*_1}(\sigma/w_0,\sigma/w_1) \nonumber \\
				&=& 1 - (w_0/\sigma)^{-\alpha} - (w_1/\sigma)^{-\alpha} + (w_1/\sigma)^{-\delta} \min\{ (w_0/\sigma)^{-\alpha} , (w_1/\sigma)^{-(\alpha -\delta)}\} \nonumber \\.
			\end{eqnarray}
			It remains to prove Part (c). Stationarity is trivial from the definition of the process. For the Markovian property, recall that $T_n^{*}=\sigma/Y^{*}_n$, we know that $Y^*_n$ is Markovian, and $T^*_n$ is the invertible transformation of a Markovian process, hence $T^*_n$ is also  Markovian.
		\end{proof}
		\begin{corollary}
			If the sequence of random variable $\{T^*_n\}$ is as defined in equation $(3.10)$, then the mean (for $\alpha>1$) and variance (for $\alpha>2$) are 
			\begin{eqnarray}
				E(T^*_n) &=&   \frac{\sigma \alpha}{\alpha -1} \\
				Var(T^*_n) &=& \frac{\sigma^2\alpha}{(\alpha -1)^2(\alpha -2)}.
			\end{eqnarray}
		\end{corollary}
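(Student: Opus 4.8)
The plan is to reduce everything to the marginal distribution already identified in part (a) of the preceding Theorem, namely $T^*_n \sim P(I)(\sigma,\alpha)$, or equivalently to the representation $T^*_n = \sigma/Y^*_n$ with $Y^*_n$ an A-M PFD variable, so that $Y^*_n \sim PFD(\alpha)$, i.e.\ $Y^*_n \sim Beta(\alpha,1)$. Since, by stationarity, all the quantities in question depend on $n$ only through this common marginal law, it suffices to compute a single pair of moments and then scale.

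First I would compute the raw moments of $1/Y^*_n$ directly from the $Beta(\alpha,1)$ density $f(y)=\alpha y^{\alpha-1}$ on $(0,1)$:
$$E\!\left(\frac{1}{Y^*_n}\right) = \alpha\int_0^1 y^{\alpha-2}\,dy = \frac{\alpha}{\alpha-1}, \qquad E\!\left(\frac{1}{(Y^*_n)^2}\right) = \alpha\int_0^1 y^{\alpha-3}\,dy = \frac{\alpha}{\alpha-2},$$
the first integral converging precisely when $\alpha>1$ and the second when $\alpha>2$. Multiplying by the appropriate power of $\sigma$ then gives $E(T^*_n)=\sigma\alpha/(\alpha-1)$ and $E((T^*_n)^2)=\sigma^2\alpha/(\alpha-2)$.

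Then I would assemble the variance via $Var(T^*_n) = E((T^*_n)^2) - [E(T^*_n)]^2$ and simplify. Putting the bracket $\frac{1}{\alpha-2}-\frac{\alpha}{(\alpha-1)^2}$ over a common denominator, the numerator $(\alpha-1)^2-\alpha(\alpha-2)=\alpha^2-2\alpha+1-\alpha^2+2\alpha=1$ collapses, leaving $\frac{1}{(\alpha-2)(\alpha-1)^2}$ and hence $Var(T^*_n)=\frac{\sigma^2\alpha}{(\alpha-1)^2(\alpha-2)}$.

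There is no substantive obstacle here; the only points needing any care are (i) recording the moment-existence conditions $\alpha>1$ and $\alpha>2$, which the statement already flags, and (ii) the small algebraic cancellation of the $\alpha^2$ terms in the numerator of the variance. Alternatively, one could simply invoke part (a) of the Theorem together with the classical Pareto(I) moment formulas, making the derivation immediate.
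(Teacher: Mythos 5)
Your proposal is correct and follows essentially the same route as the paper: both reduce to the representation $T^*_n=\sigma/Y^*_n$ with $Y^*_n\sim PFD(\alpha)$ and compute the moments of $1/Y^*_n$, the only difference being that you carry out the Beta$(\alpha,1)$ integrals and the variance cancellation explicitly where the paper simply quotes $E(1/Y^*_n)$ and $Var(1/Y^*_n)$. Your explicit computation in fact confirms the stated values (and the existence conditions $\alpha>1$, $\alpha>2$), including the factor $\alpha/(\alpha-1)$ that the paper's displayed intermediate step abbreviates.
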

		\begin{proof}
			We  know that 
			\begin{eqnarray}
				E(T^*_n) &=& E(\sigma/Y^*_n) = \sigma E(1/Y^*_n)  = \sigma \frac{1}{\alpha -1} \\
				Var(T^*_n) &=& Var(\sigma/Y^*_n) = \sigma^2 Var(1/Y^*_n) =  \sigma^2 \frac{\alpha}{(\alpha -1)^2(\alpha -2)}.
			\end{eqnarray}
		\end{proof}
		
		If the sequence of random variables $\{T^*_n\}$ is as defined in equation $(3.10)$ then the auto-correlation function will be 
		\begin{eqnarray}
			Corr(T^*_0, T^*_1) = \frac{E(T^*_0 T^*_1) - \big(\frac{\sigma \alpha}{\alpha -1}\big)^2}{\frac{\sigma^2\alpha}{(\alpha -1)^2(\alpha -2)}}.
		\end{eqnarray}
		which is monotone in $\delta$.
		
		The only thing needed in order to get the auto-correlation is $E(T^*_0 T^*_1)$. Now,
		\begin{eqnarray}
			E(T^*_0 T^*_1) &=& E((\sigma/Y^*_0)(\sigma/Y^*_1) ) = \sigma^2 E((1/Y^*_0) (1/Y^*_1)) \nonumber \\
			&=& \sigma^2 E \big(  U_0^{-1/\alpha}   \min \{ (T_0^*)^{-\frac{\alpha}{\alpha-\delta}}, U^{-1/\delta}_1\} \big) \nonumber\\
			&=& \sigma^2 E \big(  U_0^{-1/\alpha}   \min \{ (U_0)^{-\frac{1}{\alpha-\delta}}, U^{-1/\delta}_1\} \big) \nonumber\\
			&=& \sigma^2 \int_{0}^{1} u^{-\frac{1}{\alpha}}_0 \Bigg[ \int_{0}^{u^{\frac{\delta}{\alpha-\delta}}_0}  u^{-\frac{1}{\alpha -\delta}}_1 du_1\Bigg] du_0  +  \sigma^2 \int_{0}^{1} u^{-\frac{1}{\alpha}}_0 \Bigg[ \int_{u^{\frac{\delta}{\alpha-\delta}}_0}^{1}  u^{-\frac{1}{ \delta}}_0 du_1\Bigg] du_0 \nonumber  \\
			&=& \sigma^2 \frac{\alpha \delta}{\delta - 1} \Bigg[ \frac{1}{ \alpha -1} - \frac{\alpha-\delta }{\alpha^2 - 2\alpha+\delta}\Bigg] + \sigma^2 \frac{\alpha(\alpha-\delta)}{\alpha^{2}-2\alpha+\delta}.\nonumber
		\end{eqnarray}  
		
		\begin{theorem}
			If the sequence of random variables $\{T^*_n\}$ is as defined in equation $(3.10)$ then 
			\begin{eqnarray}
				P(T^*_1< T^*_0) = \frac{\delta}{\alpha + \delta}.
			\end{eqnarray}	
		\end{theorem}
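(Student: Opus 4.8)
The plan is to use the same device that drives the proofs of the preceding theorems in this subsection, namely the representation $T^*_n = \sigma/Y^*_n$, where $\{Y^*_n\}$ is the underlying A-M PFD process. Since the map $x \mapsto \sigma/x$ is strictly decreasing on $(0,\infty)$, the event $\{T^*_1 < T^*_0\}$ coincides, up to a null set, with $\{Y^*_1 > Y^*_0\}$. Thus $P(T^*_1 < T^*_0) = P(Y^*_1 > Y^*_0)$, and it remains only to evaluate the latter probability, which is already recorded (page 19 of Arnold et al. \cite{asm23}); for completeness I would also supply the short self-contained computation below.

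To compute $P(Y^*_1 > Y^*_0)$ directly, recall from \eqref{AMPFDP} that $Y^*_0 = U_0^{1/\alpha}$ and $Y^*_1 = \max\{(Y^*_0)^{\alpha/(\alpha-\delta)}, U_1^{1/\delta}\}$, with $U_0,U_1$ i.i.d. $uniform(0,1)$ and $0<\delta<\alpha$. Because $Y^*_0 \in (0,1)$ and the exponent $\alpha/(\alpha-\delta)$ exceeds $1$, we have $(Y^*_0)^{\alpha/(\alpha-\delta)} \le Y^*_0$ almost surely, so the maximum defining $Y^*_1$ can exceed $Y^*_0$ only through its second argument. Hence, up to a null event, $\{Y^*_1 > Y^*_0\} = \{U_1^{1/\delta} > Y^*_0\} = \{U_1 > (Y^*_0)^\delta\}$. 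Conditioning on $Y^*_0$ (equivalently on $U_0$) and using independence of $U_1$ gives
$$P(Y^*_1 > Y^*_0) = E\big[\,1 - (Y^*_0)^{\delta}\,\big] = 1 - E\big[U_0^{\delta/\alpha}\big] = 1 - \frac{\alpha}{\alpha+\delta} = \frac{\delta}{\alpha+\delta},$$
since $E[U_0^{\delta/\alpha}] = \int_0^1 u^{\delta/\alpha}\,du = \alpha/(\alpha+\delta)$. Combining with the first paragraph yields $P(T^*_1 < T^*_0) = \delta/(\alpha+\delta)$, as claimed.

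There is no substantial obstacle here: the only step requiring a moment's care is the almost-sure inequality $(Y^*_0)^{\alpha/(\alpha-\delta)} \le Y^*_0$, which collapses the maximum and reduces the question to a one-line uniform integral; everything else is elementary bookkeeping, entirely parallel to the proof of the corresponding statement for the A-M CPFD process $\{W^*_n\}$.
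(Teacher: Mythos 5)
Your proof is correct and takes the same route as the paper: write $T^*_n=\sigma/Y^*_n$, use the strict monotonicity of $x\mapsto\sigma/x$ to get $P(T^*_1<T^*_0)=P(Y^*_1>Y^*_0)$, and then evaluate this as $\delta/(\alpha+\delta)$. The only difference is that the paper cites page 19 of Arnold et al.\ for the value of $P(Y^*_1>Y^*_0)$, whereas you derive it directly from $Y^*_1=\max\{(Y^*_0)^{\alpha/(\alpha-\delta)},U_1^{1/\delta}\}$ via the inequality $(Y^*_0)^{\alpha/(\alpha-\delta)}\le Y^*_0$ and the integral $E[U_0^{\delta/\alpha}]=\alpha/(\alpha+\delta)$, which is a correct and welcome self-contained addition.
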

		\begin{proof}
			By the definition of the process we have $T_1^*= \sigma/Y_1^*$, consequently $P(T^*_1< T^*_0)$ is given by
			\begin{eqnarray}
				P(T^*_1 < T^*_0) = P(\sigma/Y^*_1< \sigma/Y^*_0)= P(Y^*_1 > Y^*_0) = \frac{\delta}{\alpha +\delta}
			\end{eqnarray}
			For the computation of $P(Y^*_1  >Y^*_0)$, we refer to page 19 of Arnold et al. \cite{asm23}.
		\end{proof}


		\section{Statistical Inference}
		\subsection{Estimation}
		In the following, we use the method of moments to obtain consistent estimators for the parameters for both of the CPFD processes.
		
		\subsubsection{Method of moments}
		We define the following statistics for the observed sample paths $\{V^*_1,..., V^*_m\}$ from the Kundu CPFD process given in $(3.3)$.
		\begin{eqnarray}
			\bar{V^*}  &=& \frac{1}{m}\sum_{i=1}^{m} V^*_i  \\
			P &=& \frac{1}{m} \sum_{i=1}^{m}  I(V^*_i <V^*_{i-1}).
		\end{eqnarray}
		where $I(.)$ is the indicator function. \\

		For the Kundu CPFD process in $(3.3)$ and the expression for $P(V^*_1 <V^*_2)$ for the case $\alpha > \beta$, the following two moment equations can be solved for estimates of $\alpha$ and $\beta$.
		\begin{eqnarray}
			\overline{V^*}&=& \frac{1}{\alpha + \beta +1}
			\\
			P&=&\frac{\alpha + \beta}{2\alpha+\beta}.
		\end{eqnarray}
		
		We obtain the following consistent estimates.
		
		\begin{eqnarray}
			\hat{\alpha}&=& \frac{1-\overline{V^*} (1+ 	\hat{\beta})}{\overline{V^*}}
			\\
			\hat{\beta}&=& \frac{(2P-1)(1-	\overline{V^*})}{\overline{V^*}P}.
		\end{eqnarray}
		
		This solution does not always satisfy the condition $\alpha > \beta$. Now, for the case $\alpha < \beta$  we have
		
		$$P(V^*_1>V^*_2)=\frac{\alpha}{2\alpha+\beta}.$$
		
		The following two moment equations can be solved for estimates of $\alpha$ and $\beta$.
		\begin{eqnarray}
			\overline{V^*}&=& \frac{1}{\alpha + \beta +1}
			\\
			P&=&\frac{\beta}{2\beta+\alpha}.
		\end{eqnarray}
		
		We can then obtain the following consistent estimates.
		
		\begin{eqnarray}
			\hat{\beta}&=& \frac{(1-\overline{V^*})P}{(1-P)\overline{V^*}}
			\\
			\hat{\alpha}&=& \frac{\hat{\beta}(1-2P)}{P}
		\end{eqnarray}
		
		Now, for the A-M  CPFD process in $(3.4)$ and the expression for $P(W^*_1 <W^*_2)$, the following two moment equations can be solved for estimates of $\alpha$ and $\beta$.
		\begin{eqnarray}
			\overline{W^*}&=& \frac{1}{\alpha +1}
			\\
			P&=&\frac{\delta}{\alpha+\delta}.
		\end{eqnarray}
		
		We obtain the following consistent estimates.
		
		\begin{eqnarray}
			\hat{\alpha}&=& \frac{1-\overline{W^*}}{\overline{W^*}}
			\\
			\hat{\delta}&=& \frac{P(1- \overline{W^*})}{\overline{W^*}(1-P)}.
		\end{eqnarray}
		
		For the Kundu Pareto process in $(3.9)$ and the expression for $P(S^*_1<S^*_2)$, for the case, $\alpha>\beta$, the following two moment equations can be solved for estimates of $\alpha$ and $\beta$. We also know that a consistent estimator  for $\sigma $  is	$$\hat{\sigma}= S_{(1)}= \min\{S_1,...,S_m\}.$$
		Now, using the following two moment equations
		\begin{eqnarray}
			\overline{S^*}&=& \sigma  \frac{\alpha + \beta }{\alpha +\beta-1}
			\\
			P&=&\frac{\alpha +\beta}{2\alpha+\beta}.
		\end{eqnarray}

		We obtain the following consistent estimates.
		
		\begin{eqnarray}
			\hat{\alpha}&=& \frac{(1-P)\overline{S^*}}{P(\overline{S^*}-\hat{\sigma})}
			\\
			\hat{\beta}&=&  \frac{\hat{\alpha}(2P-1)}{1-P}.
		\end{eqnarray}
		
		Now, for the case $\alpha <\beta$, the two moment equations are 	
		
		\begin{eqnarray}
			\overline{S^*}&=& \sigma  \frac{\alpha + \beta }{\alpha +\beta-1}
			\\
			P&=&\frac{\beta}{2\beta+\alpha}.
		\end{eqnarray}

		We obtain the following consistent estimates.
		
		\begin{eqnarray}
			\hat{\alpha}&=& \frac{\overline{S^*}(1-2P)}{(1-P)(\overline{S^*}-\hat{\sigma})}
			\\
			\hat{\beta}&=&\frac{\hat{\alpha} P}{(1-2P)}.
		\end{eqnarray}

		For an A-M Pareto process in $(3.10)$ and the expression for $P(T^*_1<T^*_2)$, the following two moment equations can be solved for estimates of $\alpha$ and $\delta$. We also know that consistent estimator  for $\sigma $  is	$$\hat{\sigma}= S_{(1)}= \min\{S_1,...,S_m.\}$$ Now,

		\begin{eqnarray}
			\overline{T^*}&=& \sigma \frac{\alpha}{\alpha-1}
			\\
			P&=& \frac{\delta}{\alpha + \delta}.
		\end{eqnarray}

		We obtain the following consistent estimates.
		
		\begin{eqnarray}
			\hat{\delta}&=& \frac{\overline{T^*}P}{(\overline{T^*}-\hat{\sigma})(1-P)}
			\\
			\hat{\alpha}&=&  \frac{	\hat{\delta}(1-P)}{P}.
		\end{eqnarray}

		\section{Applications}
		In the following two subsections, we illustrate a simulation study and give examples
		of real-life applications of the two CPFD-processes given in $(3.3)$ and $(3.4)$  and two Pareto processes given in $(3.7)$ and $(3.8)$.

		\subsection{Simulation study}
		For the processes given in equation  $(3.3)$ \& $(3.4)$ and $(3.7)$\& $(3.8)$ for the same moment estimators given in Section 5, we illustrate a simulation procedure on the behavior of estimators by varying the parameter values with increasing sample path sizes.  
		
		We have simulated $2000$ data sets of sample path size  $m = 20, 30, 50, 100, 200, 500$ for the parameter  vectors $(\alpha=0.5,\beta=0.1)$ for the process in  $(3.3)$ and $(\alpha=1,\delta=0.1)$ for the process in  $(3.4)$ .
		Similarly,  for the same number of data sets with the same varying sample path sizes, we have simulated observations from the process in $(3.7)$ for the parameter vectors  
		$(\sigma=1, \alpha=1, \beta =2)$ and  $(\sigma=1, \alpha=4, \delta =2)$ from the process in  $(3.8)$ .

		We refer to the following Figures \ref{fig3}--\ref{fig6} for the bootstrapped distribution of each parameter estimate for the processes defined above. The numerical evidence suggests that as sample size increases, the moment estimates approach the true parameter values with standard errors that decrease as the samlpe size increases.

		\begin{figure} [h!]
			\begin{subfigure}{8cm}
				\centering
				\includegraphics[width=8cm]{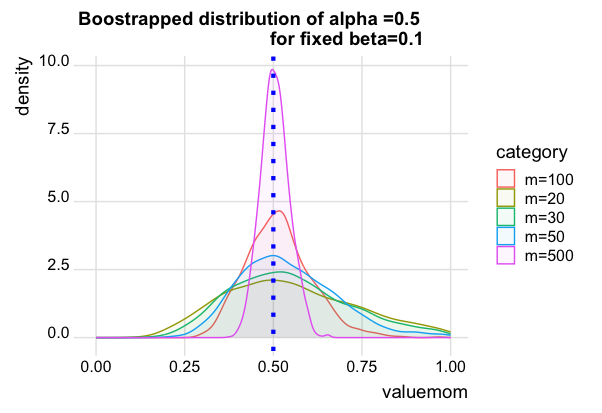}
				\caption{$\alpha$} 
			\end{subfigure}
			\begin{subfigure}{8cm}
				\centering
				\includegraphics[width=8cm]{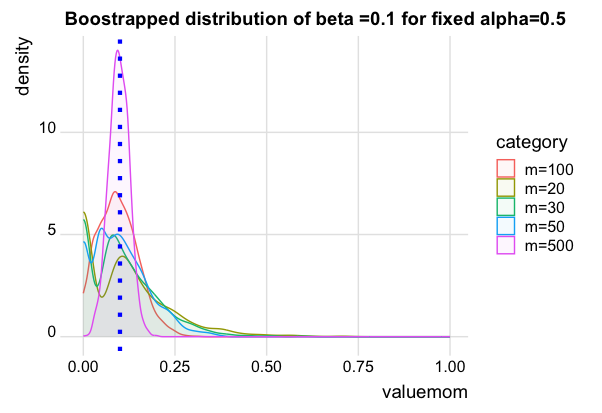}
				\caption{$\beta$} 
			\end{subfigure}
			\caption{Kundu CPFD process }
			\label{fig3}
		\end{figure}
		
		\begin{figure} [h!]
			\begin{subfigure}{8cm}
				\centering
				\includegraphics[width=8cm]{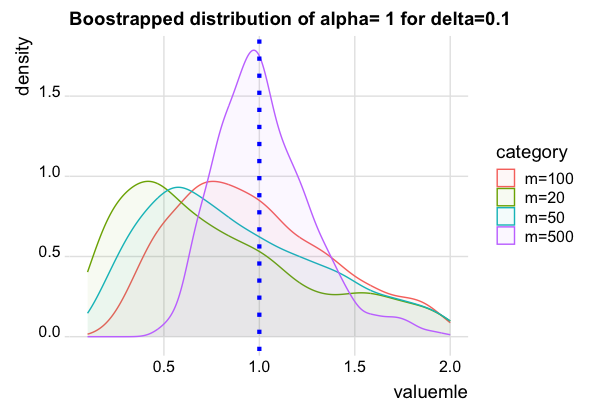}
				\caption{$\alpha$} 
			\end{subfigure}
			\begin{subfigure}{8cm}
				\centering
				\includegraphics[width=8cm]{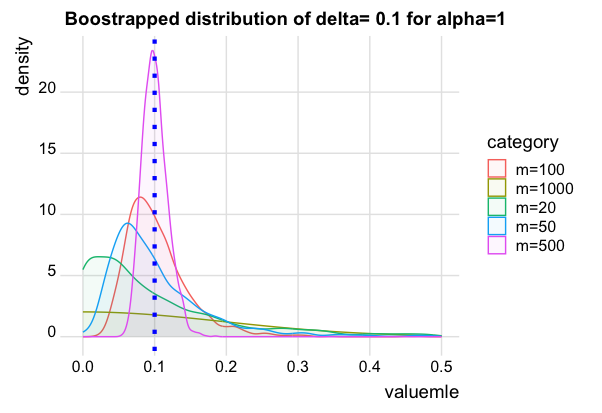}
				\caption{$\delta$} 
			\end{subfigure}
			\caption{A-M CPFD process }
			\label{fig4}
		\end{figure}

		\begin{figure} [h!]
			\begin{subfigure}{5cm}
				\centering
				\includegraphics[width=5cm]{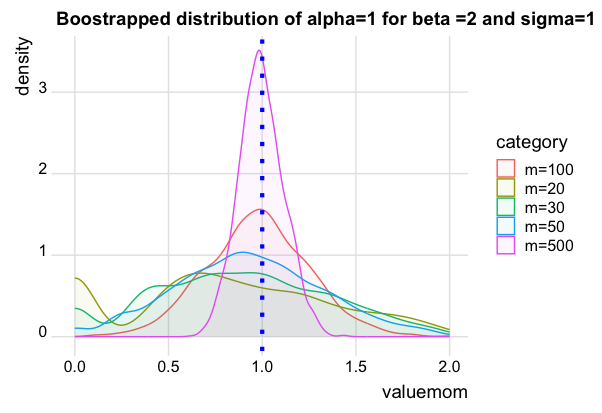}
				\caption{$\alpha$} 
			\end{subfigure}
			\begin{subfigure}{5cm}
				\centering
				\includegraphics[width=5cm]{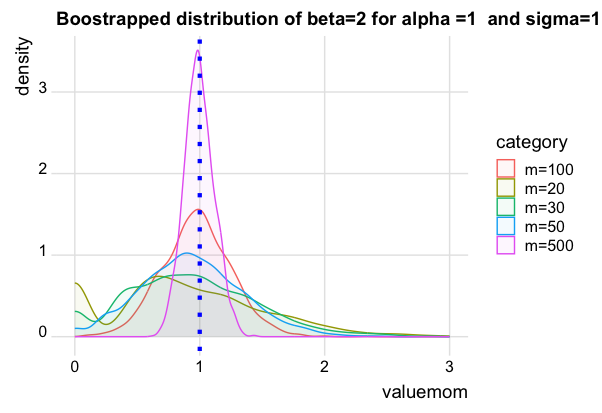}
				\caption{$\beta$} 
			\end{subfigure}
			\begin{subfigure}{5cm}
				\centering
				\includegraphics[width=5 cm]{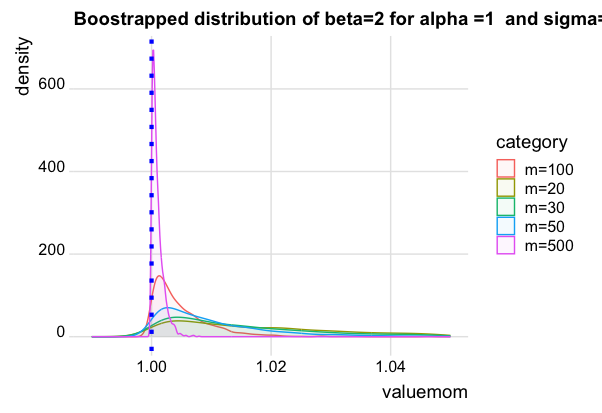}
				\caption{$\sigma$} 
			\end{subfigure}
			\caption{Kundu Pareto Process }
			\label{fig5}
		\end{figure}

		\begin{figure} [h!]
			\begin{subfigure}{5cm}
				\centering
				\includegraphics[width=5cm]{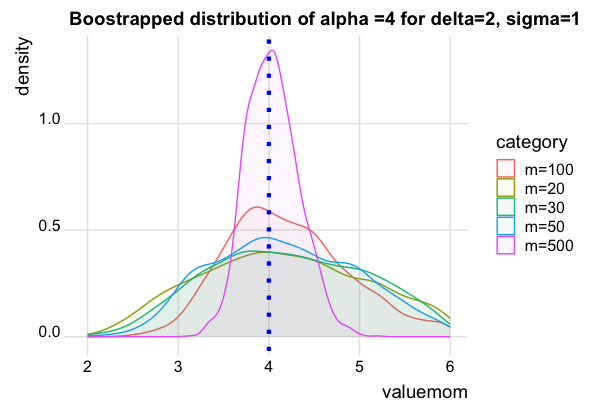}
				\caption{$\alpha$} 
			\end{subfigure}
			\begin{subfigure}{5cm}
				\centering
				\includegraphics[width=5cm]{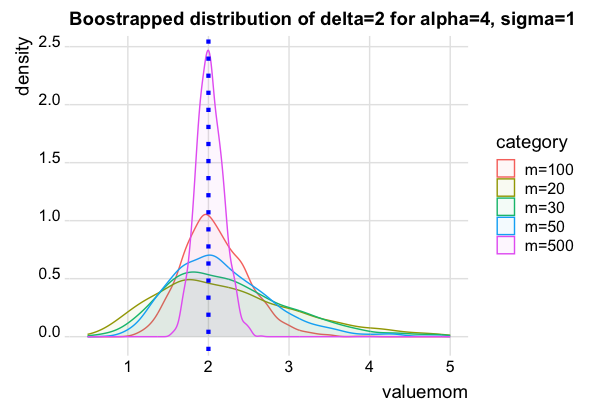}
				\caption{$\delta$} 
			\end{subfigure}
			\begin{subfigure}{5cm}
				\centering
				\includegraphics[width=5cm]{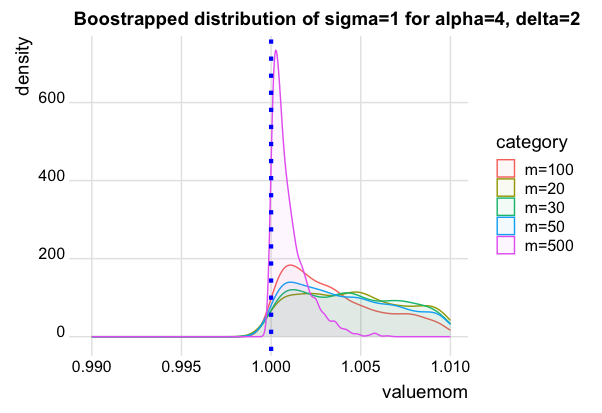}
				\caption{$\sigma$} 
			\end{subfigure}
			\caption{AM Pareto Process}
			\label{fig6}
		\end{figure}

		\subsection{Real-life data}
		
	In the following, we consider S\&P data set from 3 January 1928 to 20 February 1928 of size $34$ observations. The data set can be obtained from the website  \href{https://www.kaggle.com/datasets/camnugent/sandp500}{\nolinkurl{https://www.kaggle.com/datasets/camnugent/sandp500}}.  
	
	Here, we assume that the data generating process for the S\&P data is a proportional hazard process with unknown marginal distribution $F^*$ and underlying CPFD process of the kind in either  $(3.3)$ or $(3.4)$. We used a non-parametric approach to estimate $1-F^*$ by using  $1-F_m$, the complementary empirical distribution function of the $m=34$ available data points of S\&P. Now, the transformation $1-F_m$ applied to the available  S\&P data will be approximately a sample from a CPFD process.

	We refer to Table \ref{table:1} and \ref{table:2} for the fitted parameter values for the two CPFD processes and the mean square error(MSE)  between the complementary empirical distribution function with the estimated CPFD processes. We can conclude that the AM CPFD Process $(3.4)$ fit reasonably well for the data.
	
	For the new Pareto processes, we make similar assumptions but without applying a  distributional transformation to the available data.  We refer to Table  \ref{table:3} and \ref{table:4} for the fitted parameter values for the two new Pareto processes and the mean square error(MSE)  between the actual  S\&P data with the estimated Pareto processes. We may conclude that a Kundu Pareto process ( 3.7 ) provides a reasonable fit for the data.
	
	Unfortunately, there is no proper goodness of fit measure available to specifically quantify how well the processes fit the data.  The problem of assessing the suitability of either of the power function process models to a real-world data set will require further investigation.

		 \begin{table}[h!]
		 	\centering
		 	\begin{tabular}{||c | c ||} 
		 		\hline
		 		Parameters & Estimates \\ [0.5ex] 
		 		\hline\hline
		 		$\alpha$ & $0.963$ \\ 	\hline
		 		$\beta$ & $0.1203$ \\ 	\hline
		 		MSE & $0.212$\\  [1ex] 
		 		\hline
		 	\end{tabular}
		 	\caption{S\&P data fitted to $(3.3)$ process}
		 	\label{table:1}
		 \end{table}
		 
		 \begin{table}[h!]
		 	\centering
		 	\begin{tabular}{||c | c ||} 
		 		\hline
		 		Parameters & Estimates \\ [0.5ex] 
		 		\hline\hline
		 		$\alpha$ & $1.083$ \\ 	\hline
		 		$\delta$ &  $0.855$\\ 	\hline
		 		MSE & $0.155$ \\  [1ex] 
		 		\hline
		 	\end{tabular}
		 	\caption{S\&P data fitted to $(3.4)$ process}
		 	\label{table:2}
		 \end{table}

			 \begin{table}[h!]
			\centering
			\begin{tabular}{||c | c ||} 
				\hline
				Parameters & Estimates \\ [0.5ex] 
				\hline\hline
				$\sigma$ & $16.950$ \\ 	\hline
				$\alpha$ & $7.015$ \\ 	\hline
				$\beta$ & $26.307$ \\ 	\hline
				MSE & $0.257$\\  [1ex] 
				\hline
			\end{tabular}
			\caption{S\&P data fitted to $(3.7)$ process}
			\label{table:3}
		\end{table}
		
		\begin{table}[h!]
			\centering
			\begin{tabular}{||c | c ||} 
				\hline
				Parameters & Estimates \\ [0.5ex] 
				\hline\hline
				$\sigma$ & $16.950$ \\ 	\hline
				$\alpha$ & $33.321$ \\ 	\hline
				$\delta$ &  $37.487$\\ 	\hline
				MSE & $9.636$ \\  [1ex] 
				\hline
			\end{tabular}
			\caption{S\&P data fitted to $(3.8)$process}
			\label{table:4}
		\end{table}

		\section{Conclusion}
		
		We have considered two complementary power function distribution processes and two new Pareto processes, all stationary while some are Markovian. We have derived the marginal and bivariate distributional properties and included moment computation up to the auto-correlation function of all of the processes. Since the derived bivariate distributions do not have densities, likelihood-based inferences such as maximum likelihood and the Bayesian approach cannot be used to estimate the process parameters. However, we have derived one more variation property of each of the processes and have used it to obtain simple expressions for consistent estimators of the parameters of the processes. We also included a simulation study of the behavior of the estimators with varying sample  sizes.   Finally, we included an example of an application to the S\&P data and illustrated the general procedure to fit such processes.  The investigation of such processes is still at the beginning and will eventually merit a place in the toolkit of modelers dealing with proportional hazard processes.
		
		\section*{Acknowledgment}
		The second author, S. Sachdeva's research, is Sponsered by the DST INSPIRE.
		
		\section*{Conflicts of Interest}
		The authors declare no conflict of interest.

	\end{document}